\newtheorem{definition}{Definition}[section]
\newtheorem{lemma}[definition]{Lemma}
\newtheorem{theorem}[definition]{Theorem}
\newtheorem{corollary}[definition]{Corollary}
\newtheorem{claim}[definition]{Claim}
\newtheorem{remark}[definition]{Remark}
\newcommand{\qed}{\hfill $\square$\smallskip}
\renewcommand{\S}{L}
\newcommand{\Sp}{S_+}
\newcommand{\Sm}{S_-}
\title{Phase Transition in Democratic Opinion Dynamics}
\author{Bernd G\"artner\thanks{gaertner@inf.ethz.ch}}
\author{Ahad N. Zehmakan\thanks{abdolahad.noori@inf.ethz.ch }}
\affil{Department of Computer Science, ETH Zurich}
\providecommand{\keywords}[1]{\textbf{\textit{Key Words:}} #1}
\date{} 
\begin{document}
\maketitle
\begin{abstract}
Consider a community where initially, each individual is positive or negative regarding a reform proposal. In each round, individuals gather randomly in fixed rooms of different sizes, and all individuals in a room agree on the majority opinion in the room (with ties broken in favor of the negative opinion). The \emph{Galam model}---introduced in statistical physics, specifically sociophysics---approximates this basic random process. We approach the model from a more mathematical perspective and study the threshold behavior and the consensus time of the model. 
\end{abstract}
\keywords{Galam model, opinion dynamics, threshold behavior, phase transition, consensus time, majority rule.}

\section{Introduction}
\label{Introduction}
Different discrete-time dynamic processes have been introduced to model various social phenomena like rumor spreading~\cite{chierichetti2009rumor,wang2017rumor,merlone2014reaching}, opinion forming~\cite{gartner2017color,zehmakan2019tight,zehmakan2018opinion}, and fear propagation~\cite{adler1991bootstrap}. In the present paper, we focus on the model introduced by Galam~\cite{galam2002minority}, also known as \emph{democratic opinion dynamics in random geometry}.

In the real world, opinion forming in a community is a complex process that cannot be explained in purely mathematical terms. There are recurring patterns, however, that one can try to capture mathematically. One such pattern is that an initial minority is able to eventually win the debate ---an example is the Irish ``No'' to the Nice European treaty~\cite{galam2002minority}. In this case, the social sciences offer possible explanations, for example based on \emph{the spiral of silence}~\cite{Noelle-Neumann,moscovici1991silent}. In contrast, Galam has investigated whether there are also some purely mathematical mechanisms that could (partially) explain how an initial minority is able to ``take over''~\cite{galam2002minority}. 

By now, the Galam model is a very well-established model in statistical physics, particularly sociophysics~\cite{galam2008sociophysics}, and there is a large body of literature about it~\cite{galam2004contrarian,galam2008sociophysics,gekle2005opinion,galam2005heterogeneous,qian2015activeness}. Prior work, mostly done by physicists, investigates the model experimentally, or analytically (in specific settings that are simple enough to allow for analytic solutions).

What is lacking so far is a sound theory of the Galam model in its full generality. In the current paper, we provide such a theory. We formally state and prove several basic properties of the model that were experimentally observed before, or have analytically been proven in special cases. We also give results concerning the \emph{consensus time} and the \emph{threshold behavior} of the Galam model: How long does it take to win the debate in the Galam model? And how sensitive is the outcome, with respect to the initial opinions?

\paragraph{The room-wise majority model.}
Consider a society of size $n$ where initially $n_+(0)$ individuals are positive regarding a reform proposal, and the other ones are against it. In discrete-time rounds $t=1,2,\ldots$, individuals gather randomly in fixed rooms of different sizes from $2$ to a constant $\S\ge 2$, where the numbers of seats in all rooms sum up to $n$. After each round, all individuals in a room have adopted the majority opinion in the room. At a tie, this will be the negative opinion; we call this the \emph{majority rule}. Rooms are meant to model physical spaces such as offices, houses, bars, and restaurants; time rounds correspond to social interaction at lunches, dinners, parties etc. We are interested in the sequence $n_+(1), n_+(2),\ldots$, where $n_+(t)$ (a random variable) denotes the number of positive individuals after round $t$. Similarly, we define $n_-(t)$.

The tie-breaking rule can be thought of as a social \textit{inertia principle}, where in an unclear situation, individuals tend to prefer the status quo. Hence, we have a bias towards the negative opinion built into the system. This bias can actually be quite strong: if all rooms are of size $2$, one negative individual suffices to eventually make the whole society adopt the negative opinion. On the other hand, if all rooms are of odd size, there is no bias. In general (rooms of various sizes) it is interesting to quantify the bias, and this is one thing that we are after here.

\paragraph{The Galam model.}
If we randomly match individuals to seats in round $t$, any fixed seat will be occupied by a positive individual with probability $P_{+}(t-1):={n_+(t-1)}/{n}$. However, for a set of seats, these events are not independent, which complicates the analysis. Galam~\cite{galam2002minority} therefore suggested the following model which approximates the room-wise majority model and is called \emph{Galam model}. In this model, we talk about seats, not individuals, and it works as follows: In round $t$, each seat is made positive with probability $P_{+}(t-1)$, independently from the other seats, and negative with probability $P_-(t-1):=n_-(t-1)/n=1-P_{+}(t-1)$. Once each seat has been assigned an opinion, we proceed as in the room-wise majority model to determine the number of positive seats $n_{+}(t)$.

The independent assignment of opinions to seats will in general of course result in a number of positive seats different from $n_+(t-1)$; in expectation, the number of positive seats is still $n_+(t-1)$, though, and this is why we call the Galam model an approximation of the room-wise majority model. But we explicitly point out that we do not assess the quality of this approximation here. In this paper, we consider the Galam model to be the ground truth. 

The Galam model defines a Markov chain on the state space $\{0,1,\ldots,n\}$, where state $u$ corresponds to having $u$ positive seats. There are two absorbing states $\Sm:=0$ and $\Sp:=n$. Moreover, for any state $u\notin \{\Sm,\Sp\}$, there is a positive probability of going to state $\Sm$ or $\Sp$ in the next step (if all seats turn out to have the same opinion). Hence, the Markov chain eventually converges to $\Sm$ or $\Sp$. We want to understand how quickly this happens, and---more importantly---how likely it is that state $\Sm$, say, is reached, depending on the starting state $n_+(0)$.

The chain is specified by the transition probabilities $\pi_{uv}=\Pr[n_{+}(t)=v|n_{+}(t-1)=u]$. The following definition formally introduces these transition probabilities.

\begin{definition}[Galam process]\label{def:Galam}
Let $\S$ (maximal room size) and $n$ (number of seats) be natural numbers. For a natural number $i\in[\S]$ (room size) and $u\in[n]$ (total number of positive seats after the previous round), define
\[
\pi (i,u) := \sum_{j=\lfloor i/2\rfloor+1}^{i}\binom{i}{j}\left(\frac{u}{n}\right)^j\left(1-\frac{u}{n}\right)^{i-j}
\]
(probability of the room being positive in the end of the round).

Let ${\cal R}$ (rooms) be a partition of $[n]:=\{1,\ldots,n\}$ such that $1< |R|\leq \S$ for all $R\in{\cal R}$. For a natural number $v\in[n]$ (number of positive seats by the end of the round), let ${\cal R}_v\subseteq 2^{\cal R}$ be the set of all subsets of rooms with a total of $v$ seats,
\[
{\cal R}_v = \{{\cal Q}\subseteq {\cal R}: |\cup_{Q\in {\cal Q}} Q|=v\}.
\]

For natural numbers $u,v\in[n]$, define
\[
\pi_{uv} := \sum_{{\cal Q}\in {\cal R}_v} \prod_{R\in {\cal Q}} \pi(|R|,u) \prod_{R\in {\cal R}\setminus {\cal Q}}(1- \pi(|R|,u))
\] 
(probability of having exactly $v$ positive seats by the end of the round). For fixed $n_+(0)\in[n]$, the \emph{Galam process} is the Markov chain (sequence of random variables) $n_+(t), t\geq 1$, defined by 
\[
\Pr[n_{+}(t)=v|n_{+}(t-1)=u] = \pi_{uv}, \quad u,v\in[n],~ t\geq 1.
\]
\end{definition}

\paragraph{Statement of results.}
Galam~\cite{galam2002minority} conjectured that the Galam process exhibits a threshold behavior with one phase transition; i.e., there is a threshold value $\alpha$ so that $P_+(0)<\alpha$ and $P_+(0)>\alpha$ result in $S_{-}$ and $S_{+}$, respectively, with probability approaching one. Moreover, he claimed that if all rooms are of odd size, then the threshold value $\alpha$ is equal to $1/2$ because there is no tie-breaking, otherwise $\alpha>1/2$, meaning that the negative opinion has an advantage.

In the present paper, we prove the following results.

\begin{itemize}
\item[(i)] If all rooms are of the same size $i=\S\geq 3$, there is a threshold value $\alpha_i$ such that for every $\epsilon>0$, $P_+(0)\geq \alpha_i+\epsilon$ results in convergence to $\Sp$, and $P_+(0)\leq \alpha_i-\epsilon$ in convergence to $\Sm$, asymptotically almost surely (a.a.s.)\footnote{We say an event occurs asymptotically almost surely (a.a.s.) if its probability is at least $1-o(1)$ as a function of $n$.}. In both cases, the \emph{consensus time} (expected time to convergence) is $\mathcal{O}(\log\log n)$, and this bound is best possible.

\item[(ii)] The same convergence behavior holds if rooms have different sizes $3\leq i\leq \S\leq 16$, with a threshold value $\alpha$ depending on the distribution of room sizes. The restriction of $\S\leq 16$ comes from the limitations of our proof technique; the result probably holds also for $\S>16$. 

\item[(iii)] If all rooms are of size $i=\S=2$, the consensus time is $\mathcal{O}(\log n)$, and this bound is best possible. Qualitatively, this case is different from the other ones, as there is no threshold: any fixed value $P_+(0)<1$ will still lead to convergence to $\Sm$, a.a.s.

\item[(iv)] We prove that for even $i$, $\alpha_{i+2}<\alpha_i$, meaning that the threshold for the positive opinion to win the debate goes down if rooms get larger. This is plausible, since ties (that benefit the negative opinion) become less likely.
\end{itemize}

We therefore rigorously confirm Galam's conjecture, and we also determine how quickly the debate is won in the Galam model. The bound of $\mathcal{O}(\log\log n)$ on the consensus time in the case (i) shows that this happens very quickly, asymptotically. The constant hidden in this bound depends on how close we are to the threshold.

After the introduction of the Galam model~\cite{galam2002minority}, a series of extensions have been suggested to make the model more realistic. Some of those modifications include adding contrarian effect \cite{galam2004contrarian}, introducing a random tie-breaking rule \cite{galam2005heterogeneous}, considering three competing opinions \cite{gekle2005opinion}, inflexible individuals \cite{galam2007role}, and defining the level of activeness \cite{qian2015activeness}. We briefly introduce some of these variants in Section \ref{conclusion} and discuss the possibility of extending our techniques to analyze these variants as prospective research work.
 
To summarize, we are dealing with the Galam model---which is a very well-studied dynamic opinion forming model in the literature of sociophysics---from a more theoretical angle. Our approach not only allows us to prove some conjectures about the behavior of the process, it also provides us with mathematical tools to achieve some new interesting insights regarding the consensus time and threshold behavior of the process. In the bigger picture, the main goal of this research is to make one of the first steps in the direction of building a new bridge between the study of the opinion forming dynamics in sociophysics and theoretical computer science.

\section{Preliminaries}

\paragraph{The expected transition probability.} As the formula for the transition probabilities $\pi_{uv}$ in Definition~\ref{def:Galam} might suggest, the Galam process is rather complicated in full detail. But we will show in the next section that conditioned on the current state, the next state is sharply concentrated around its mean in the relevant cases. We therefore start by computing this mean. As we are trying to understand the asymptotic behavior for $n\rightarrow\infty$, it turns out to be more convenient to work with the positive seat probabilities $P_+(t) = n_+(t)/n$ instead of the actual numbers $n_+(t)$ of positive seats.

\begin{definition}
With $n,u,v$ and $\pi_{uv}$ as in Definition~\ref{def:Galam}, and for $p,q\in[0,1]$, we define $P_+^p$
to be the random variable with distribution $\Pr[P_+^p = q] = \pi_{pn, qn}$.
\end{definition}
In words, $P_+^p$ tells us how the positive seat probability evolves in one round of the Galam process, if that probability is $p$ at the beginning of the round. Note that formally, this is only defined if $p$ and $q$ are integer multiples of $1/n$, but as we are interested in the situation $n\rightarrow\infty$, we typically think of $p$ and $q$ as arbitrary probabilities. Furthermore, we define $P_-^p:=1-P_+^{1-p}$, which tells us how the negative seat probability evolves in one round of the Galam process, if that probability is $p$ at the beginning of the round. We often argue about $P_+^p$ and $P_-^p$ simultaneously in which case we use $P_{\pm}^p$ to stand for both random variables. Let us also define $n_{\pm}^p:=nP_{\pm}^p$.

\begin{lemma}
Suppose there are $r_i$ rooms with $i$ seats for $1< i \leq \S$ ($r_{\S}\ne 0$), and that $\sum_{i=1}^{\S}a_i=1$, where $a_i:={ir_i}/{n}$ is the fraction of seats in rooms of size $i$. With $\pi(\cdot,\cdot)$ as in Definition~\ref{def:Galam}, we have
\begin{equation}
\label{eq 1}
\mathbb{E}[P_+^p]=\sum_{i=1}^{\S}a_i~ \pi(i, np) = \sum_{i=1}^{\S}a_i \sum_{j=\lfloor\frac{i}{2}\rfloor+1}^{i} {i \choose j} p^{j}(1-p)^{i-j}.
\end{equation}
\end{lemma}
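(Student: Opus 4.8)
The plan is to avoid the unwieldy sum over $\mathcal{R}_v$ in the definition of $\pi_{uv}$ altogether, and instead compute $\mathbb{E}[P_+^p]$ directly by linearity of expectation, exploiting that in the Galam model the rooms behave independently. The starting point is to write the number of positive seats at the end of the round as a sum of per-room contributions. For each room $R \in \mathcal{R}$, let $X_R$ be the indicator random variable that equals $1$ if $R$ is positive at the end of the round and $0$ otherwise. Since every seat in a positive room is positive and every seat in a negative room is negative, we have the exact identity $n_+^p = \sum_{R \in \mathcal{R}} |R|\, X_R$.

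The key step is to identify $\mathbb{E}[X_R] = \Pr[X_R = 1]$ with $\pi(|R|, np)$. By the definition of the Galam model, each of the $|R|$ seats in $R$ is made positive independently with probability $p = (np)/n$, so the number of positive seats in $R$ is a binomial random variable with parameters $|R|$ and $p$. Under the majority rule, $R$ becomes positive precisely when its number of positive seats $j$ satisfies $j > |R|/2$, i.e.\ $j \geq \lfloor |R|/2 \rfloor + 1$; here one should check that this integer threshold correctly encodes the tie-breaking towards the negative opinion for both even and odd $|R|$. Summing the binomial probabilities over this range yields exactly $\pi(|R|, np) = \sum_{j = \lfloor |R|/2 \rfloor + 1}^{|R|} \binom{|R|}{j} p^j (1-p)^{|R|-j}$.

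Combining these two facts by linearity of expectation gives $\mathbb{E}[n_+^p] = \sum_{R \in \mathcal{R}} |R|\, \pi(|R|, np)$. It then remains to group the rooms by their common size: collecting the $r_i$ rooms of size $i$ contributes $i\, r_i\, \pi(i, np)$ for each $i$, so that $\mathbb{E}[n_+^p] = \sum_{i=2}^{\S} i\, r_i\, \pi(i, np) = n \sum_{i=1}^{\S} a_i\, \pi(i, np)$, where the terms with $i=1$ vanish because all rooms have size at least $2$ (so $a_1 = 0$) and we used $a_i = i r_i / n$. Dividing by $n$ and expanding $\pi(i, np)$ yields both equalities in~\eqref{eq 1}. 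There is no genuine obstacle here beyond bookkeeping; the only point requiring a moment's care is the justification that each room's positivity is governed by an independent binomial — immediate from the seat-wise independent assignment in the Galam model, and not even requiring independence \emph{across} rooms since linearity of expectation suffices — together with the verification that $\lfloor i/2 \rfloor + 1$ is the correct majority threshold under the stated tie-breaking convention.
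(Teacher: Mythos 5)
Your argument is correct and is essentially the paper's own proof: the paper dispatches this lemma in one sentence by linearity of expectation, using exactly the fact you establish that each room of size $i$ contributes $i\,\pi(i,np)$ positive seats in expectation. Your write-up simply spells out the details (the per-room indicator decomposition, the binomial computation of $\Pr[X_R=1]$, and the grouping by room size) that the paper leaves implicit.
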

To prove this, we use linearity of expectation on top of the fact that a room of size $i$ contributes $i\pi(i,np)$ positive seats on expectation.

Furthermore, we have
\begin{equation}
\label{eq2}
\mathbb{E}[P_-^p]= \sum_{i=1}^{\S}a_i \sum_{j=\lceil i/2\rceil}^{i} {i \choose j} p^j(1-p)^{i-j}
\end{equation}
which should be clear since the second sum is the probability that the outcome of a room of size $i$ is negative when each seat is occupied by a negative individual with probability $p$, independently. 


Equations~(\ref{eq 1}) and~(\ref{eq2}) suggest to analyze the involved polynomials.

\begin{definition}
\label{definition:function}
For a distribution $a_1,\ldots,a_{\S}$ that we consider to be fixed, we define polynomials
\[
f_i(p):=\sum_{j=\lfloor i/2 \rfloor+1}^{i} {i \choose j} p^j (1-p)^{i-j}, \quad i\in[\S], p\in[0,1]
\]
as well as
\[
f(p) :=\sum_{i=1}^{\S}a_if_i(p) = \mathbb{E}[P_+^p].
\]
 Furthermore, we define the functions $h_i(p)$ and $h(p)$ respectively to be $f_i(p)-p$ and $f(p)-p$.
\end{definition} 

The main idea behind our proof regarding the phase transition is as follows. We show that function $f(p)$ has a unique fixed point $\alpha$ in the interval $(0,1)$ such that $f(p)<p$ for $p\in (0,\alpha)$ and $f(p)>p$ for $p\in (\alpha,1)$ (see Figure \ref{fig1}). Furthermore, it is proven that $P_+^p$ is sharply concentrated around its expectation, which yields our desired threshold behavior.

\begin{figure}[h]
\begin{center}
\includegraphics[width=0.3\textwidth]{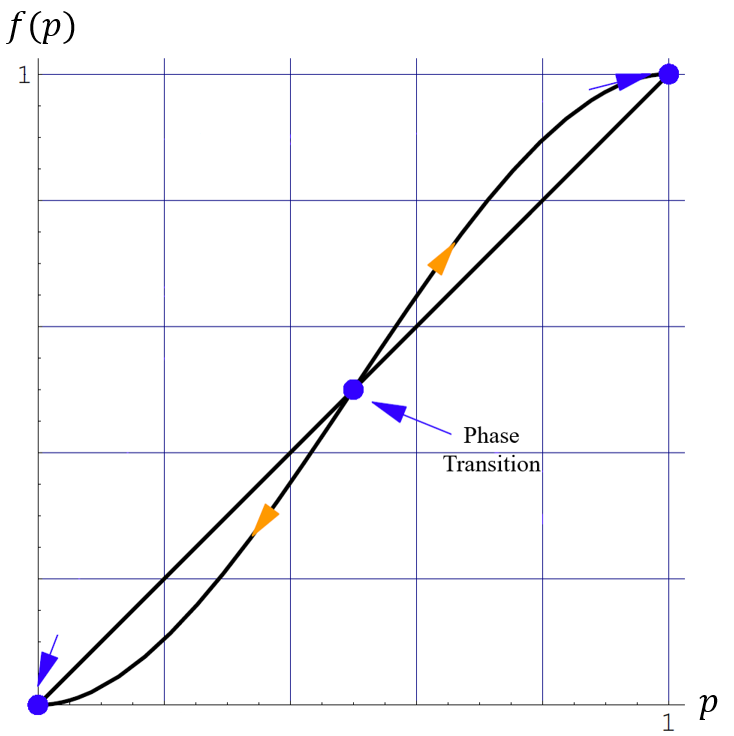}
\caption{Variation of function $f(p)$ for $p\in[0,1]$\label{fig1}}
\end{center}
\end{figure}

Galam~\cite{galam2002minority} already computed the unique fixed points of the functions $f_i(p)$ for $3\leq i\leq 6$, which are equal to $1/2$ for $i=3,5$, and $\frac{1+\sqrt{13}}{6}\approx 0.77$ and approximately $0.65$ respectively for $i=4$ and $i=6$. 
 
\section{Phase Transition and Consensus Time}
\label{Galam Model}
We first set up some concentration result on the random variables $P_{\pm}^p$ in Section~\ref{ingredients}. Building on that, we discuss the threshold behavior and the consensus time of the Galam process in Sections~\ref{room-size-3},~\ref{general-case}, and~\ref{room-size-2}, for different settings. Furthermore, we draw the connection between the threshold value and the room sizes, in Section~\ref{threshold-room}.

\subsection{Ingredients: Tail Bounds}
\label{ingredients}
By applying Azuma's inequality \cite{feller1968introduction} (see Theorem \ref{theorem 6}), we provide Corollary~\ref{lemma 6} which asserts that the random variables $P_{\pm}^p$ are sharply concentrated around their expectations. This allows us to prove Lemma~\ref{lemma 5}, concerning how the positive/negative seat probability evolves during the process, which is the main ingredient of our proofs regarding the threshold behavior of the process. 

If we are given $n$ discrete probability spaces $(\Omega_i,Pr_i)$ for $1 \leq i \leq n$, then their product is defined to be the probability space over the ground set $\Omega:=\Omega_1 \times \Omega_2 \times \dots \times \Omega_n$ with the probability function 
\[
\Pr[(\omega_1, \dots, \omega_n)]:=\prod_{i=1}^{n}Pr_i[\omega_i],
\]
where $\omega_i \in \Omega_i$. This simply means that a random element $\omega=(\omega_1, \dots,\omega_n) \in \Omega$ is obtained by drawing the components $\omega_i \in \Omega_i$ independently according to their respective probability distributions $Pr_i$.

Let $(\Omega,Pr)$ be the product of $n$ discrete probability spaces, and let $X:\Omega\rightarrow \mathbb{R}$ be a random variable over $\Omega$. We say that the \emph{effect} of the $i$-th coordinate is at most $c_i$ if for all $\omega, \omega' \in \Omega$ which differ only in the $i$-th coordinate we have
\[
|X(\omega)-X(\omega')|\leq c_i.
\]
Azuma's inequality states that $X$ is sharply concentrated around its expectation if the effect of the individual coordinates is not too big.

\begin{theorem}
\label{theorem 6}
(Azuma's inequality \cite{feller1968introduction}) Let $(\Omega,Pr)$ be the product of $n$ discrete probability spaces $(\Omega_i,Pr_i)$ for $1 \leq i \leq n$, and let $X:\Omega\rightarrow \mathbb{R}$ be a random variable with the property that the effect of the $i$-th coordinate is at most $c_i$. Then, for all $\delta> 0$ we have
\[
\Pr[(1-\delta)\mathbb{E}[X]< X < (1+\delta)\mathbb{E}[X]]\geq 1-2\exp(-\frac{\delta^2\mathbb{E}[X]^2}{2\sum_{i=1}^{n}c_{i}^2}).
\]
\end{theorem}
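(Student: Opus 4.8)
The plan is to prove this classical concentration bound via the standard Doob martingale (coordinate exposure) construction combined with the exponential moment method. First I would define, for $0\leq k\leq n$, the random variable $X_k:=\mathbb{E}[X\mid \omega_1,\ldots,\omega_k]$, obtained by exposing the first $k$ coordinates and averaging over the rest; then $X_0=\mathbb{E}[X]$ and $X_n=X$. The tower property of conditional expectation immediately gives $\mathbb{E}[X_k\mid \omega_1,\ldots,\omega_{k-1}]=X_{k-1}$, so $(X_k)_{k=0}^{n}$ is a martingale with respect to the filtration generated by the coordinates, and $X-\mathbb{E}[X]=\sum_{k=1}^{n}(X_k-X_{k-1})$ is its telescoping sum of increments.

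The crucial step is to show that each increment is bounded, namely $|X_k-X_{k-1}|\leq c_k$ almost surely. This is where the product structure is essential, and I expect this verification to be the main (if routine) obstacle, since it is the only place where independence of the coordinates is genuinely used. Fixing $\omega_1,\ldots,\omega_{k-1}$ and writing $g(\omega_k):=\mathbb{E}[X\mid \omega_1,\ldots,\omega_k]$ for the further average over coordinates $k+1,\ldots,n$, independence lets me express $X_{k-1}$ as the expectation of $g(\omega_k)$ over $\omega_k$. For any two values $\omega_k,\omega_k'$, the corresponding configurations differ only in the $k$-th coordinate, so the bounded-effect hypothesis together with the triangle inequality gives $|g(\omega_k)-g(\omega_k')|\leq c_k$; hence $g$ has range at most $c_k$, and therefore $X_k-X_{k-1}=g(\omega_k)-\mathbb{E}[g(\omega_k)]$ lies in an interval of width $c_k$ and satisfies $|X_k-X_{k-1}|\leq c_k$.

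With bounded increments in hand, I would apply the exponential moment method. For $s>0$, conditioning successively and applying Hoeffding's lemma to each increment (a mean-zero random variable supported on an interval of width at most $2c_k$ satisfies $\mathbb{E}[e^{sY}]\leq e^{s^2c_k^2/2}$) yields
\[
\mathbb{E}\left[e^{s(X-\mathbb{E}[X])}\right]\leq \prod_{k=1}^{n} e^{s^2c_k^2/2}=\exp\left(\frac{s^2}{2}\sum_{k=1}^{n}c_k^2\right).
\]
A Markov/Chernoff bound then gives $\Pr[X-\mathbb{E}[X]\geq t]\leq \exp\left(-st+\frac{s^2}{2}\sum_{k}c_k^2\right)$, and optimizing over $s$ (choosing $s=t/\sum_k c_k^2$) produces $\exp\left(-t^2/(2\sum_k c_k^2)\right)$. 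Applying the same argument to $-X$ and taking a union bound yields the two-sided estimate with the factor $2$.

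Finally, setting $t=\delta\,\mathbb{E}[X]$, the event $|X-\mathbb{E}[X]|<t$ is exactly $(1-\delta)\mathbb{E}[X]<X<(1+\delta)\mathbb{E}[X]$, and the bound becomes $1-2\exp\left(-\delta^2\mathbb{E}[X]^2/(2\sum_k c_k^2)\right)$, as claimed.
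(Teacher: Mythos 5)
Your proof is correct, but there is nothing in the paper to compare it against: the paper states this result as a quoted black box (Azuma's inequality, with a citation to the literature) and gives no proof of it. What you have written is the standard Doob-martingale argument, and the details check out: the coordinate-exposure martingale $X_k=\mathbb{E}[X\mid\omega_1,\ldots,\omega_k]$ is where the product structure enters, your verification that $g$ has range at most $c_k$ (hence $|X_k-X_{k-1}|\leq c_k$) is the one genuinely non-mechanical step and you handle it correctly, and the conditional application of Hoeffding's lemma with interval width $2c_k$ gives exactly $\mathbb{E}[e^{sY}]\leq e^{s^2c_k^2/2}$, so the Chernoff optimization at $s=t/\sum_k c_k^2$ reproduces the stated constant $2\sum_{i=1}^n c_i^2$ in the denominator. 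One observation worth recording: you actually prove more than you use. Since each increment lies in an interval of width $c_k$ (not merely $|X_k-X_{k-1}|\leq c_k$, which only gives width $2c_k$), Hoeffding's lemma applied with width $c_k$ yields $\mathbb{E}[e^{sY}]\leq e^{s^2c_k^2/8}$ and hence McDiarmid's sharper bound $2\exp\bigl(-2t^2/\sum_i c_i^2\bigr)$, a factor of $4$ better in the exponent than the theorem as stated; your choice to downgrade to the weaker constant is harmless and matches the claim, and the improvement would not change anything downstream, since the paper only uses the bound up to a $\Theta(\cdot)$ in the exponent (Corollary~\ref{lemma 6}). A final pedantic point, immaterial in context: rewriting the two-sided event as $|X-\mathbb{E}[X]|<\delta\,\mathbb{E}[X]$ tacitly assumes $\mathbb{E}[X]\geq 0$, which holds here since the theorem is only ever applied to nonnegative random variables such as $P_{\pm}^p$.
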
 

\begin{corollary}
\label{lemma 6}
In the Galam model, for all $\delta>0$
\[
\Pr[(1-\delta)\mathbb{E}[P_{\pm}^p]<P_{\pm}^p<(1+\delta)\mathbb{E}[P_{\pm}^p]]\geq 1-\exp(-\Theta(n\mathbb{E}[P_{\pm}^p]^2)).
\]
\end{corollary}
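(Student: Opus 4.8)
The plan is to derive Corollary~\ref{lemma 6} as a direct application of Azuma's inequality (Theorem~\ref{theorem 6}) to the random variable $X = n_+^p = n P_+^p$, the number of positive seats at the end of a round when the positive seat probability is $p$ at the start. First I would set up the product probability space: each of the $n$ seats is assigned an opinion independently (positive with probability $p$, negative with probability $1-p$), so $\Omega = \prod_{k=1}^n \Omega_k$ with each $\Omega_k = \{+,-\}$, and $X(\omega)$ counts the positive seats after applying the room-wise majority rule. The key structural observation is that flipping the opinion of a single seat $k$ can change the outcome of at most one room --- namely the room $R$ containing seat $k$ --- and that room contributes at most $|R| \leq \S$ positive seats. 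Hence the effect of each coordinate satisfies $c_k \leq \S$, a constant. This is the crucial boundedness input: because $\S$ is a fixed constant independent of $n$, we get $\sum_{k=1}^n c_k^2 \leq n\S^2 = \Theta(n)$.

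With this bound in hand, I would apply Theorem~\ref{theorem 6} directly to $X = n_+^p$. Its statement gives, for every $\delta > 0$,
\[
\Pr\bigl[(1-\delta)\mathbb{E}[X] < X < (1+\delta)\mathbb{E}[X]\bigr] \geq 1 - 2\exp\Bigl(-\frac{\delta^2 \mathbb{E}[X]^2}{2\sum_k c_k^2}\Bigr).
\]
Dividing the event inside the probability by $n$ rephrases it verbatim in terms of $P_+^p = X/n$ and $\mathbb{E}[P_+^p] = \mathbb{E}[X]/n$, so the left-hand side already matches the corollary. For the exponent, I would substitute $\mathbb{E}[X]^2 = n^2\,\mathbb{E}[P_+^p]^2$ and $\sum_k c_k^2 = \Theta(n)$ to obtain
\[
\frac{\delta^2\, n^2\, \mathbb{E}[P_+^p]^2}{2\,\Theta(n)} = \Theta\bigl(n\,\mathbb{E}[P_+^p]^2\bigr),
\]
absorbing the fixed $\delta$ and $\S$ into the $\Theta$. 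This yields the claimed bound $1 - \exp(-\Theta(n\,\mathbb{E}[P_+^p]^2))$ for $P_+^p$; the extra factor of $2$ is swallowed by the $\Theta$-notation in the exponent.

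For the $P_-^p$ case I would invoke the definition $P_-^p := 1 - P_+^{1-p}$ together with~(\ref{eq2}), and run the identical argument with the roles of positive and negative opinions exchanged; by symmetry the per-seat effect is again at most $\S$, so the same concentration bound transfers, giving the combined statement for $P_{\pm}^p$.

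The main obstacle, and the only step that requires genuine care rather than bookkeeping, is verifying the bounded-effect claim $c_k \leq \S$ and ensuring it is applied to the correct random variable. One must check that the majority rule is evaluated room by room so that seats outside the room containing $k$ are genuinely unaffected by flipping seat $k$, which is immediate from the fact that the rooms form a fixed partition of $[n]$. A secondary subtlety worth flagging is that the bound is only meaningful --- i.e. gives $1 - o(1)$ --- when $n\,\mathbb{E}[P_{\pm}^p]^2 \to \infty$, so the concentration is strong precisely in the regime where $\mathbb{E}[P_\pm^p]$ is not too small relative to $1/\sqrt{n}$; this is exactly the relevant range in the subsequent applications, but it is the reason the corollary is stated asymptotically rather than as a uniform tail bound.
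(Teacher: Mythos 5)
Your proposal is correct and takes essentially the same route as the paper: both arguments apply Azuma's inequality with the bounded-effect observation that flipping one seat's opinion can alter only the outcome of its own room, giving a per-coordinate effect of at most $\S$ (you work with the count $n_+^p$ while the paper works directly with the fraction $P_+^p$ and effect $\S/n$, which is a trivial rescaling yielding the identical exponent $\Theta(n\,\mathbb{E}[P_{\pm}^p]^2)$). The symmetric treatment of $P_-^p$ and the absorption of the constants $\delta$, $\S$, and the factor $2$ into the $\Theta$-notation likewise mirror the paper's proof.
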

\begin{proof}
We prove the statement for $P_+^p$ and the case of $P_-^p$ can be proven analogously. Label the seats from $1$ to $n$. Corresponding to the $i$-th seat for $1\le i\le n$, we define the probability space $(\Omega_i,Pr_i)$. Now, $P_+^p:\Omega \rightarrow \mathbb{R}$ is defined over $\Omega$, where $(\Omega,Pr)$ is the product of $(\Omega_i,Pr_i)$. Consider two elements $\omega,\omega'\in\Omega$ which differ only in the $i$-th coordinate (i.e., two opinion assignments which disagree only on the opinion of the $i$-th seat). Since the size of a room is at most $L$, we have
\[
|P_+^p(\omega)-P_+^p(\omega')|\le \frac{L}{n}.
\] 
Hence, by applying Azuma's inequality (see Theorem \ref{theorem 6}), we have for all $\delta>0$
\[
\Pr[(1-\delta)\mathbb{E}[P_+^p]<P_+^p<(1+\delta)\mathbb{E}[P_+^p]]\geq 1-2\exp(-\frac{\delta^2\mathbb{E}[P_+^p]^2}{\frac{2n \S^2}{n^2}})=1-\exp(-\Theta(n\mathbb{E}[P_+^p]^2))
\]
since both $\S$ and $\delta$ are constant in our context. \qed
\end{proof}

Now, we present Lemma~\ref{lemma 5}, which will be needed several times later on. The goal here is the following: we want to understand the evolution of the Galam process over a number of rounds, i.e.\ we want to argue about the distribution of $n_{\pm}(t+t_0)$, given $n_{\pm}(t_0)$. For example, if $n_+(t_0)$ is already large, we expect $n_+(t+t_0)$ to be significantly larger (a sufficient majority will expand its expected share). It will usually be easy to prove lower or upper bounds for the expected drift of the process in one round. Formally, these will be bounds of the form $Q(p)\le \mathbb{E}[P_{\pm}^p]$, or $\mathbb{E}[P_{\pm}^p]\le R(P)$ that hold for $p$ in a suitable range.

From this, we want to conclude that (a) this drift happens a.a.s.\ (for this we can already use Azuma's inequality), and (b) that it happens over a number of consecutive rounds a.a.s.\ This is what Lemma~\ref{lemma 5} will do for us. On a qualitative level, conclusion (b) seems clear, but we still need to provide a quantitative version that is parameterized with the expected drift in one round.

\begin{lemma}
\label{lemma 5}
Let $\delta>0$ be a fixed real number, and let $I$ be a monotone decreasing function such that for all $p\in[0,1]$, $(1-\delta)\mathbb{E}[P_{\pm}^p]< P_{\pm}^p< (1+\delta)\mathbb{E}[P_{\pm}^p]$ with probability at least $1-I(\mathbb{E}[P_{\pm}^p])$. (Note that Corollary~\ref{lemma 6} gives us such a function.) Let $[\underline{b},\overline{b}]\subseteq [0,1]$ be a fixed interval. 

\begin{itemize}
\item[(i)] \textbf{Lower bound on the drift over $\textbf{m}$ rounds}:
Let $Q$ be a nonnegative and monotone increasing function such that for all $p\in [\underline{b},\overline{b}]$, $\mathbb{E}[P_{\pm}^p]\geq Q(p)$. We define functions $Q^t$ over $[0,1]$ as follows.
\begin{eqnarray*}
Q^0 (p) &:=& p, \\
Q^t (p) &:=& (1-\delta)Q (Q^{t-1}(p)), \quad t>0.
\end{eqnarray*}
Suppose for some $b\in [\underline{b},\overline{b}]$ and some $m\ge 1$, we have $Q^t(b)\in [\underline{b},\overline{b}]$ for all $t<m$. Then the probability that $P_{\pm}(t'+m) \geq Q^{m}(b)$ given $P_{\pm}(t')\geq b$ for some $t'\ge 0$, is at least $1-\sum_{t=1}^{m}I(Q^{t}(b))$. 
\item[(ii)] \textbf{Upper bound on the drift over $\textbf{m}$ rounds}:
Let $R$ be a nonnegative and monotone increasing function such that for all $p\in [\underline{b},\overline{b}]$, $\mathbb{E}[P_{\pm}^p]\leq R(p)$. We define functions $R^t$ over $[0,1]$ as follows.
\begin{eqnarray*}
R^0 (p) &:=& p, \\
R^t (p) &:=& (1+\delta)R (R^{t-1}(p)), \quad t>0.
\end{eqnarray*}
Suppose for some $b\in[\underline{b},\overline{b}]$ and some $m\ge 1$, we have $R^t(b)\in [\underline{b},\overline{b}]$ for all $t<m$. Then the probability that $P_{\pm}(t'+m) \leq R^{m}(b)$ given $P_{\pm}(t')\le b$ for some $t'\ge 0$, is at least $1-\sum_{t=1}^{m}I(R^{t}(b))$.

\item[(iii)] For $Q(p)=Kp^{\ell}$ and $R(p)=Kp^{\ell}$, where $K,\ell$ are some constants, we have \[
Q^{t}(b)\ge (K(1-\delta))^{\sum_{j=0}^{t-1}\ell^{j}}b^{\ell^t}, \quad t\ge 1,
\]
and
\[
R^t(b)\le (K(1+\delta))^{\sum_{j=0}^{t-1}\ell^{j}}b^{\ell^t}, \quad t\ge 1.
\]
\end{itemize}
\end{lemma}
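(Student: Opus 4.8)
The plan is to establish parts (i) and (ii) by a single recursive argument — part (ii) being the mirror image of part (i) with all inequalities and the sign of $\delta$ reversed — and then to derive part (iii) by a short separate induction on $t$. The argument applies verbatim to both $P_+^p$ and $P_-^p$, since each has an increasing mean (the polynomial $f$ of Definition~\ref{definition:function}, respectively the expression in equation~(\ref{eq2}), is the probability that a suitable binomial exceeds a threshold and is thus monotone in $p$) and each satisfies Corollary~\ref{lemma 6}; I write $P_{\pm}$ throughout. The three ingredients I rely on are: the time-homogeneous Markov property of the Galam process, the monotonicity of $p\mapsto\mathbb{E}[P_{\pm}^p]$, and the concentration bound of Corollary~\ref{lemma 6} as repackaged through the decreasing function $I$.

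For part (i), I would first reduce to $t'=0$ by time-homogeneity, so the goal becomes bounding $\Pr[P_{\pm}(m)\ge Q^m(b)\mid P_{\pm}(0)\ge b]$. Writing $A_t$ for the event $\{P_{\pm}(t)\ge Q^t(b)\}$ (note $A_0$ is exactly the conditioning event, since $Q^0(b)=b$), the heart of the proof is the one-step recursion $\Pr[A_m\mid A_0]\ge (1-I(Q^m(b)))\,\Pr[A_{m-1}\mid A_0]$. To obtain it, I would condition on the value $p$ of $P_{\pm}(m-1)$, drop all terms with $p<Q^{m-1}(b)$ (legitimate for a lower bound, as every summand is nonnegative), and invoke the Markov property to replace the conditional next-step distribution by that of $P_{\pm}^p$. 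For each surviving $p\ge Q^{m-1}(b)$, monotonicity of the mean gives $\mathbb{E}[P_{\pm}^p]\ge\mathbb{E}[P_{\pm}^{Q^{m-1}(b)}]\ge Q(Q^{m-1}(b))$, where the last step uses the hypothesis $Q^{m-1}(b)\in[\underline{b},\overline{b}]$ together with the drift bound $\mathbb{E}[P_{\pm}^p]\ge Q(p)$. Corollary~\ref{lemma 6} then yields $P_{\pm}^p\ge(1-\delta)\mathbb{E}[P_{\pm}^p]\ge(1-\delta)Q(Q^{m-1}(b))=Q^m(b)$ with probability at least $1-I(\mathbb{E}[P_{\pm}^p])\ge 1-I(Q^m(b))$, the final inequality using that $I$ is decreasing and $\mathbb{E}[P_{\pm}^p]\ge Q^m(b)$. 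Since this per-state bound is uniform in the surviving $p$, factoring it out of the sum leaves exactly $\Pr[A_{m-1}\mid A_0]$, giving the recursion.

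Unrolling the recursion down to the base case (the same computation with $m=1$, using $b\in[\underline{b},\overline{b}]$ directly) produces $\Pr[A_m\mid A_0]\ge\prod_{t=1}^m(1-I(Q^t(b)))$, and the Weierstrass/union-bound inequality $\prod_{t=1}^m(1-x_t)\ge 1-\sum_{t=1}^m x_t$ for $x_t\in[0,1]$ converts this into the claimed $1-\sum_{t=1}^m I(Q^t(b))$. Part (ii) follows by the identical argument with all inequalities reversed, $(1-\delta)$ replaced by $(1+\delta)$, the conditioning event $\{P_{\pm}(t')\le b\}$, and the upper drift bound together with the upper tail of Corollary~\ref{lemma 6}. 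For part (iii) I would prove the closed form for $Q(p)=Kp^\ell$ by induction on $t$: the base case is $Q^1(b)=(1-\delta)Kb^\ell$, and in the inductive step I substitute the hypothesis into $Q^t(b)=(1-\delta)K(Q^{t-1}(b))^\ell$, using monotonicity of $Q$ on the nonnegative reals and collecting the geometric exponent $1+\sum_{j=1}^{t-1}\ell^j=\sum_{j=0}^{t-1}\ell^j$; the bound for $R$ is identical with $(1+\delta)$ in place of $(1-\delta)$.

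The main obstacle is getting the recursive conditioning step both correct and uniform. The delicate point is that we condition on the inequality event $P_{\pm}(t')\ge b$ rather than on an exact value, and that the running state $p=P_{\pm}(m-1)$ may exceed $\overline{b}$, where the drift bound $\mathbb{E}[P_{\pm}^p]\ge Q(p)$ is not assumed to hold. Both issues are resolved by never applying the drift bound at $p$ itself but only at the controlled point $Q^{m-1}(b)\in[\underline{b},\overline{b}]$, and then transporting it upward to all $p\ge Q^{m-1}(b)$ via monotonicity of the mean. This is exactly what makes the per-state probability bound uniform and lets the sum collapse cleanly to $\Pr[A_{m-1}\mid A_0]$.
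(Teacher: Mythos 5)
Your proposal is correct and takes essentially the same approach as the paper's proof: the same events $\{P_{\pm}(t)\ge Q^{t}(b)\}$, the same per-round bound $1-I(Q^{t}(b))$ obtained by applying the drift bound only at the controlled point $Q^{t-1}(b)\in[\underline{b},\overline{b}]$ and transporting it to all $p\ge Q^{t-1}(b)$ via monotonicity of $\mathbb{E}[P_{\pm}^{(\cdot)}]$ and the decreasingness of $I$, followed by the same product-to-sum (Weierstrass) step and the identical induction for part (iii). The only difference is cosmetic: the paper factors $\Pr[E_m\mid E_0]\ge\prod_{t=1}^{m}\Pr[E_t\mid E_{t-1}]$ by the chain rule over $\bigcap_{i=0}^{m}E_i$, whereas you unroll the equivalent one-step recursion by conditioning on the exact state at time $m-1$.
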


\begin{proof}
We prove part (i) where we w.l.o.g.\ assume that $t'=0$, and part (ii) is analogous. Let us define events
\[
E_ i = \{P_{\pm}(i)\ge Q^{i}(b)\}, \quad i\geq 0.
\]
We are interested in bounding $\Pr[E_m|E_0] = \Pr[E_m\cap E_0] / \Pr[E_0]$. We have that
\begin{eqnarray*}
\Pr[E_m\cap E_0] &\ge& \Pr[\bigcap_{i=0}^mE_i] = \prod_{t=0}^m\frac{\Pr[\bigcap_{i=0}^tE_i]}{\Pr[\bigcap_{i=0}^{t-1}E_i]}
= \prod_{t=0}^m\frac{\Pr[E_t \cap \bigcap_{i=0}^{t-1}E_i]}{\Pr[\cap_{i=0}^{t-1}E_i]} \\
&=& \prod_{t=0}^m\Pr[E_t| \bigcap_{i=0}^{t-1}E_i] = \Pr[E_0]\cdot \prod_{t=1}^m\Pr[E_t| \bigcap_{i=0}^{t-1}E_i].
\end{eqnarray*}
Hence,
\[
\Pr[E_m|E_0] \ge \prod_{t=1}^m\Pr[E_t| \bigcap_{i=0}^{t-1}E_i] = \prod_{t=1}^m\Pr[E_t| E_{t-1}],
\]
since $P_{\pm}(t)$ only depends on $P_{\pm}(t-1)$. Because $\prod_{t=1}^m (1-I(Q^t(b))) \geq 1 - \sum_{t=1}^m I(Q^t(b))$, the statement is implied by the following 
\begin{claim}
\label{claim 1}
$\Pr[E_t| E_{t-1}] \geq 1-I(Q^t(b))$.
\end{claim}

\paragraph{Proof of Claim~\ref{claim 1}.} We show that for every $p\geq Q^{t-1}(b)$, 
\[
\\Pr[P_{\pm}(t)\ge Q^{t}(b) | P_{\pm}(t-1) = p] \geq 1-I(Q^t(b)). 
\]
A simple calculation (omitted) then shows that the same bound also holds for 
\[
\Pr[E_t| E_{t-1}] = \\Pr[P_{\pm}(t)\ge Q^{t}(b) | P_{\pm}(t-1) \geq Q^{t-1}(b)].
\]

To prove the claim, we exploit that 
\begin{equation}\label{eq:chain}
Q^t(b) = (1-\delta)Q(Q^{t-1}(b)) \leq (1-\delta)\mathbb{E}[P_{\pm}^{Q^{t-1}(b)}) \leq (1-\delta) \mathbb{E}[P_{\pm}^p] \leq \mathbb{E}[P_{\pm}^p].
\end{equation}
The first inequality holds since $Q^{t-1}(b)\in[\underline{b},\overline{b}]$ by assumption. The second inequality is monotonicity of the function $\mathbb{E}[P_{\pm}^{(\cdot)}]$, along with $p\geq Q^{t-1}(b)$. Hence, we can infer that
\begin{eqnarray*}
\\Pr[P_{\pm}(t)\ge Q^{t}(b) | P_{\pm}(t-1) = p] &\geq& \\Pr[P_{\pm}(t)\ge (1-\delta) \mathbb{E}[P_{\pm}^p] | P_{\pm}(t-1) = p] \\
&=:& \Pr [P_{\pm}^p\ge (1-\delta) \mathbb{E}[P_{\pm}^p]] \geq 1 - I(\mathbb{E}[P_{\pm}^p]),
\end{eqnarray*}
by applying the definition of $P_{\pm}^p$ and then the assumption. The claim then follows from $I$ being a decreasing function and the same chain of inequalities as in Equation~(\ref{eq:chain}).

Part (iii) is proven by induction on $t$ for $Q^{t}(b)$. The same proof applies to the case of $R^{t}(b)$. For $t=1$, $Q^1(b)=(1-\delta)Q(Q^0(b))=(1-\delta)Q(b)=(1-\delta)Kb^{\ell}$. Let the statement hold for $t-1\geq 1$. Since $Q^{t}(b)=(1-\delta)Q(Q^{t-1}(b))$, by applying the induction hypothesis and the fact that $Q$ is increasing we have
\begin{align*}
\centering
& Q^t(b)\ge (1-\delta)Q((K(1-\delta))^{\sum_{j=0}^{t-2}\ell^j}b^{\ell^{t-1}})=(1-\delta)K((K(1-\delta))^{\sum_{j=0}^{t-2}\ell^j}b^{\ell^{t-1}})^\ell=\\
& (1-\delta)K(K(1-\delta))^{\sum_{j=1}^{t-1}\ell^j}b^{\ell^{t}}=(K(1-\delta))^{\sum_{j=0}^{t-1}\ell^{j}}b^{\ell^t}.
\end{align*}\qed

\end{proof}

\subsection{All Rooms of Size $i\ge 3$}
\label{room-size-3}
In this section, we prove that the Galam process exhibits a threshold behavior with one phase transition if all rooms are of the same size $i\geq 3$. That is, there is a threshold value $\alpha_i$ such that if $P_+(0)$ is ``slightly'' more (less, respectively) than $\alpha_i$, then the process converges to $\Sp$ ($\Sm$, respectively). 
\begin{theorem}
\label{theorem 5} 
In the Galam model with $a_i=1$ (i.e., all rooms are of the same size $i$) for some $i\ge 3$ and for any $\epsilon>0$,

(i) $P_{+}(0)\geq \alpha_i+\epsilon$ results in $\Sp$

(ii) $P_{+}(0)\leq \alpha_i-\epsilon$ results in $\Sm$
\\
in $\mathcal{O}(\log\log n)$ rounds a.a.s., where $\alpha_i$ is the unique fixed point of $f_i(p)$ in $(0,1)$.
\end{theorem}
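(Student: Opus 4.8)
The plan is to combine the fixed-point geometry of $f_i$ with the drift-and-concentration machinery of Lemma~\ref{lemma 5}, splitting the evolution into a slow ``escape'' phase near the threshold $\alpha_i$ and a fast ``collapse'' phase near the target absorbing state. I will prove part~(i) in detail; part~(ii) is symmetric through the identity $P_-^p = 1-P_+^{1-p}$, which shows that the negative fraction evolves under the expected-update map $g_i(p):=\mathbb{E}[P_-^p]=1-f_i(1-p)$, whose unique fixed point is $1-\alpha_i$ with $g_i(p)>p$ for $p>1-\alpha_i$; since $P_+(0)\le\alpha_i-\epsilon$ is equivalent to $P_-(0)\ge 1-\alpha_i+\epsilon$, the argument for (i) applied to $P_-$ drives $P_-\to 1$, i.e.\ convergence to $\Sm$. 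Throughout I take as given (from the discussion around Definition~\ref{definition:function}) that $f_i$ is increasing on $[0,1]$ with a unique fixed point $\alpha_i\in(0,1)$ and $h_i(p)=f_i(p)-p<0$ on $(0,\alpha_i)$, $h_i(p)>0$ on $(\alpha_i,1)$.

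\emph{Escape phase.} Fix a small constant $\gamma>0$ with $1-\gamma>\alpha_i+\epsilon$. On the compact interval $[\alpha_i+\epsilon,\,1-\gamma]\subset(\alpha_i,1)$ the continuous function $h_i$ attains a positive minimum $c_0>0$, so $\mathbb{E}[P_+^p]=f_i(p)\ge p+c_0$ there. I apply Lemma~\ref{lemma 5}(i) with $[\underline b,\overline b]=[\alpha_i+\epsilon,1-\gamma]$, $b=\alpha_i+\epsilon$, and $Q=f_i$ (nonnegative and increasing). Choosing $\delta$ a small enough constant that the net per-round additive gain $(1-\delta)f_i(p)-p\ge (1-\delta)c_0-\delta$ is still a positive constant $c_0'$ on the interval, the iterate $Q^t(b)$ grows by at least $c_0'$ each round and hence crosses $1-\gamma$ after $m_1=O(1)$ rounds. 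Since the per-round failure probabilities $I(Q^t(b))=\exp(-\Theta(n\,(Q^t(b))^2))$ are $\exp(-\Theta(n))$ (the arguments are constants), their $O(1)$-term sum is $o(1)$. Thus a.a.s.\ $P_+\ge 1-\gamma$, equivalently $P_-\le\gamma$, after $O(1)$ rounds.

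\emph{Collapse phase.} Now I track the minority fraction $P_-$ downward. Since $\mathbb{E}[P_-^p]=\sum_{j=\lceil i/2\rceil}^{i}\binom{i}{j}p^j(1-p)^{i-j}\le K p^{\ell}$ with $\ell=\lceil i/2\rceil$ and $K=\sum_{j\ge\ell}\binom{i}{j}$, I apply Lemma~\ref{lemma 5}(ii),(iii) with $R(p)=Kp^{\ell}$. Here the crucial point, and the reason the hypothesis $i\ge 3$ enters, is that $\ell=\lceil i/2\rceil\ge 2$: by part~(iii), $R^t(b)\le (K(1+\delta))^{(\ell^t-1)/(\ell-1)}b^{\ell^t}=A^{-1}(Ab)^{\ell^t}$ with $A=(K(1+\delta))^{1/(\ell-1)}$, which contracts \emph{doubly} exponentially once the starting value $b=\gamma$ is taken below $1/A$ (a further constraint on the constant $\gamma$). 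Consequently, reaching $P_-\le n^{-c}$ requires only $m_2=O(\log\log n)$ rounds. To make the accumulated failure $\sum_{t\le m_2} I(R^t(b))$ stay $o(1)$ down to a \emph{polynomially} small level, I do not use the Azuma function of Corollary~\ref{lemma 6} (whose exponent $n\,y^2$ decays by $P_-\approx n^{-1/2}$), but the sharper concentration available because all rooms have the \emph{same} size $i$ and are independent: the number of negative rooms is exactly $\mathrm{Bin}(n/i,\mathbb{E}[P_-^p])$, so Chernoff yields the same hypothesis of Lemma~\ref{lemma 5} with $I(y)=2\exp(-\Theta(ny))$, effective as long as $n\,\mathbb{E}[P_-^p]\gg 1$, i.e.\ $P_-\gg n^{-1/\ell}$.

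\emph{Endgame, and the main obstacle.} The delicate step is the very last stretch, where $P_-$ becomes subpolynomial and no multiplicative concentration survives; this is where I expect the real work. My plan is to run the collapse phase until $P_-\le n^{-1/\ell}(\log\log n)$ (a.a.s., still inside the Chernoff-valid range, the expected number of negative rooms being $\Theta((\log\log n)^{\ell})\to\infty$), then take two further rounds governed by the first moment alone: one Chernoff step brings the negative-room count down to $O((\log\log n)^{\ell})$, so that $P_-=O(\mathrm{polyloglog}(n)/n)$; the next round has expected negative-room count $(n/i)\,\mathbb{E}[P_-^{P_-}]=O(n^{1-\ell}\,\mathrm{polyloglog}(n))=o(1)$ for $\ell\ge 2$, whence a union bound gives $n_-=0$, i.e.\ absorption at $\Sp$, a.a.s. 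Adding the three phases yields total time $O(1)+O(\log\log n)+O(1)=O(\log\log n)$, and the absorbing state is then fixed. The main obstacles to watch are (a) verifying the sign/monotonicity structure of $h_i$ used in the escape phase, and (b) bridging the window where $\mathbb{E}[n_-]=\Theta(1)$ cleanly — which is exactly what the passage from Chernoff concentration to the first-moment argument, enabled by the independence of equal-sized rooms, is designed to handle.
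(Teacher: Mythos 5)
Your proposal is correct and follows the same three-phase skeleton as the paper's proof of Theorem~\ref{theorem 5} (constant-round escape from $\alpha_i+\epsilon$ via Lemma~\ref{lemma 5}(i); doubly exponential collapse of the minority via $\mathbb{E}[P_-^p]\le Kp^{\ell}$ with $\ell\ge 2$ and Lemma~\ref{lemma 5}(ii)--(iii); a first-moment endgame), but it deviates in two ways worth recording. First, in the escape phase you use the additive drift $f_i(p)\ge p+c_0$ on $[\alpha_i+\epsilon,1-\gamma]$ with $Q=f_i$, whereas the paper uses the multiplicative bound $Q(p)=\rho p$ with $\rho=\min_p f_i(p)/p>1$ so as to invoke the closed form in Lemma~\ref{lemma 5}(iii); both yield $O(1)$ rounds with failure probability $\exp(-\Theta(n))$, so this difference is cosmetic. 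Second, and more substantively, in the collapse and endgame you replace the Azuma-based Corollary~\ref{lemma 6} by a Chernoff bound on the binomial count $\mathrm{Bin}(n/i,\mathbb{E}[P_-^p])$ of negative rooms, giving $I(y)=\exp(-\Theta(ny))$ in place of $\exp(-\Theta(ny^2))$. This is exactly the device the paper itself deploys for $i=2$ (Lemma~\ref{lemma 2} and the remark following it), and your transplant to equal-size rooms $i\ge 3$ is sound since room outcomes are independent and identically distributed; it buys you multiplicative concentration down to $P_-\approx n^{-1/\ell}\log\log n$ and an endgame of only two rounds (one upper-tail step to $P_-=O(\mathrm{polyloglog}(n)/n)$, then one round with $\mathbb{E}[n_-]=o(1)$ and Markov), whereas the paper, limited by Azuma to $P_-\ge n^{-1/4+\delta'}$, needs the $\delta'=1/23$ bookkeeping and three Markov rounds in its Phase~3. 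Your route is, if anything, cleaner. One caveat: the fixed-point structure you ``take as given''---monotonicity of $f_i$, uniqueness of $\alpha_i$, and the sign of $h_i=f_i(p)-p$ on either side of $\alpha_i$---is \emph{not} established around Definition~\ref{definition:function}; it is the content of Lemma~\ref{lemma 3}, which the paper proves via the derivative identity for $f_{i,l}$ and Rolle's theorem, and which your escape phase genuinely relies on. Since the theorem statement presupposes uniqueness, and the sign pattern then follows from endpoint behavior ($f_i(p)=O(p^2)$ near $0$ and $1-f_i(p)=O((1-p)^2)$ near $1$ for $i\ge 3$), this is a defensible shortcut rather than a gap, but a complete write-up must supply that lemma (note also that Lemma~\ref{lemma 5} internally uses monotonicity of $\mathbb{E}[P_{\pm}^{(\cdot)}]$, which comes from the same derivative computation).
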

First in Lemma~\ref{lemma 3}, we prove that function $f_i(p)$, which is equivalent to $\mathbb{E}[P_+^p]$ in this setting (see Definition~\ref{definition:function}), has a unique fixed point $\alpha_i$; that is, $f_i(p)<p$ for $p\in(0,\alpha_i)$ and $f_i(p)>p$ for $p\in(\alpha_i,1)$. Therefore, by starting from some positive seat density $p$ larger (smaller, respectively) than $\alpha_i$, we expect that the positive seat probability increases (decreases, respectively) in each round. Intuitively, this explains the aforementioned threshold behavior. However, to make an a.a.s. statement and determine the consensus time of the process, we need to apply the concentration results from Section~\ref{ingredients} and do some careful calculations. 

\begin{lemma}
\label{lemma 3}
For $i\ge 3$, function $f_i(p)$ has a unique fixed point $\alpha_i$ in $(0,1)$ such that $f_i(p)<p$ for $p\in(0,\alpha_i)$ and $f_i(p)>p$ for $p\in(\alpha_i,1)$.
\end{lemma}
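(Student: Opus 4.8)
The plan is to reduce everything to the elementary calculus of the single function $h_i=f_i-p$ on $[0,1]$, exploiting the fact that $f_i(p)$ is precisely the upper-tail probability $\Pr[\mathrm{Bin}(i,p)\ge \lfloor i/2\rfloor+1]$. First I would record the two boundary values: every term of $f_i$ carries a factor $p^{j}$ with $j\ge 1$, so $f_i(0)=0$, while at $p=1$ only the $j=i$ term survives and equals $1$, so $f_i(1)=1$. Hence $h_i(0)=h_i(1)=0$, and the whole claim becomes a statement about the sign pattern of $h_i$ on the open interval $(0,1)$, together with the existence of a single interior zero.

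The key computation, and the one technical input of the argument, is the closed form of the derivative of a binomial tail. Writing $m:=\lfloor i/2\rfloor$ and differentiating $f_i$ term by term, the identities $\binom{i}{j}j=i\binom{i-1}{j-1}$ and $\binom{i}{j}(i-j)=i\binom{i-1}{j}$ turn the sum into a telescope, yielding
\[
f_i'(p)=i\binom{i-1}{m}\,p^{m}(1-p)^{\,i-1-m}.
\]
Because $i\ge 3$ forces both $m\ge 1$ and $i-1-m\ge 1$, this derivative is a strictly positive multiple of a beta density: it vanishes at both endpoints, is strictly log-concave, and is therefore strictly increasing on $(0,\frac{m}{i-1})$ and strictly decreasing on $(\frac{m}{i-1},1)$, i.e.\ strictly unimodal with a single interior peak.

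From here the argument is purely qualitative. Since $\int_0^1 f_i'(p)\,dp=f_i(1)-f_i(0)=1$, the average of the non-constant function $f_i'$ over $[0,1]$ equals $1$, so its peak value strictly exceeds $1$; combined with unimodality and the fact that $f_i'$ is below $1$ at both endpoints, the equation $f_i'(p)=1$ has exactly two solutions $p_1<p_2$. Consequently $h_i'=f_i'-1$ is negative on $(0,p_1)$, positive on $(p_1,p_2)$, and negative on $(p_2,1)$, so $h_i$ falls, then rises, then falls. Using $h_i(0)=0$ gives $h_i(p_1)<0$, and using $h_i(1)=0$ together with monotonicity on $(p_2,1)$ gives $h_i(p_2)>0$. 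On the middle interval $h_i$ is strictly increasing from a negative to a positive value, producing a unique zero $\alpha_i\in(p_1,p_2)$; on $(0,p_1)$ the function stays strictly negative and on $(p_2,1)$ strictly positive, so no other interior zero exists. This yields exactly the desired statement: $h_i<0$ on $(0,\alpha_i)$, $h_i>0$ on $(\alpha_i,1)$, and $\alpha_i$ is the unique fixed point.

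The only real obstacle is the derivative identity; although it is classical (it is just the statement that the binomial tail is the regularized incomplete beta function $I_p(m+1,i-m)$), getting the telescoping bookkeeping right—in particular checking that the boundary term at $j=i+1$ drops out because $\binom{i-1}{i}=0$—is where care is needed. Once that formula is in hand, unimodality, the average-value argument forcing the peak above $1$, and the resulting three-interval sign analysis are all routine, so I would spend the bulk of the write-up making the derivative computation airtight and then state the monotonicity conclusions succinctly.
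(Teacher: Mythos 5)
Your proof is correct, and its backbone coincides with the paper's: both hinge on the closed form for the derivative of the binomial tail---your $f_i'(p)=i\binom{i-1}{m}p^{m}(1-p)^{i-1-m}$ with $m=\lfloor i/2\rfloor$ is the paper's $(\lfloor i/2\rfloor+1)\binom{i}{\lfloor i/2\rfloor+1}p^{\lfloor i/2\rfloor}(1-p)^{i-\lfloor i/2\rfloor-1}$ via the identity $(m+1)\binom{i}{m+1}=i\binom{i-1}{m}$---and on the strict unimodality of that derivative (the paper extracts it from the sign of the second derivative; you get the same fact from strict log-concavity of the beta kernel). Where you genuinely diverge is the uniqueness step. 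The paper argues by contradiction: if $h_i$ had a fourth root in $[0,1]$, Rolle's theorem would force three critical points of $h_i$, contradicting that the unimodal $h_i'$ takes any value at most twice. You instead argue directly: since $\int_0^1 f_i'(p)\,dp=f_i(1)-f_i(0)=1$ while $f_i'$ is non-constant with $f_i'(0)=f_i'(1)=0$, the peak of $f_i'$ strictly exceeds $1$, so $f_i'=1$ has exactly two solutions $p_1<p_2$, and the resulting down--up--down shape of $h_i$ together with $h_i(0)=h_i(1)=0$ pins down both the unique interior zero $\alpha_i\in(p_1,p_2)$ and the full sign pattern in one pass. Your average-value argument buys slightly more---it locates $\alpha_i$ between the two critical points and delivers the sign pattern constructively, where the paper infers it from separately exhibited points with $h_i<0$ and $h_i>0$ plus root-counting---while the paper's Rolle argument is marginally lighter in that it never needs to know the peak of $f_i'$ exceeds $1$. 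Likewise, your telescoping computation of the derivative (with the boundary term killed by $\binom{i-1}{i}=0$) replaces the paper's backward induction on the tail index $l$, but the two computations produce the identical formula, so this difference is cosmetic.
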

\begin{proof} 
It suffices to prove that function $h_i(p)=f_i(p)-p$ has a unique root $\alpha_i$ in $(0,1)$ such that $h_i(p)$ is negative and positive respectively for $p\in(0,\alpha_i)$ and $p\in(\alpha_i,1)$. First by applying induction, we show that the first derivative of function $f_{i,l}(p):=\sum_{j=l}^{i} {i \choose j} p^j (1-p)^{i-j}$ for $1 \leq l \leq i$ is
\begin{equation}
\label{eq 2}
\frac{d f_{i,l}(p)}{d p}=l {i \choose l} p^{l-1} (1-p)^{i-l}.
\end{equation}

We do backwards induction on $l$. Consider the base case of $l=i$. The first derivative of $f_{i,i}=p^i$ is equal to $ip^{i-1}$ which satisfies Equation~(\ref{eq 2}). Now as the induction hypothesis, assume for some $1<l\leq i$, Equation~(\ref{eq 2}) holds. We prove that it holds also for $l-1$; i.e., $\frac{d f_{i,l-1}(p)}{d p}=(l-1) {i \choose l-1} p^{l-2} (1-p)^{i-l+1}$. By definition of $f_{i,l}(p)$,
\[
\frac{d f_{i,l-1}(p)}{d p}=\frac{d}{d p}({i \choose l-1}p^{l-1}(1-p)^{i-l+1})+\frac{d f_{i,l}(p)}{d p}.
\] 
Thus, by the induction hypothesis $\frac{d f_{i,l-1}(p)}{d p}$ is equal to
\begin{eqnarray*}
&&{i \choose l-1}(l-1)p^{l-2}(1-p)^{i-l+1}-{i \choose l-1}(i-l+1)p^{l-1}(1-p)^{i-l} + l{i \choose l}p^{l-1}(1-p)^{i-l} \\
&=& {i \choose l-1}(l-1)p^{l-2}(1-p)^{i-l+1},
\end{eqnarray*}
because ${i \choose l-1}(i-l+1)={i \choose l}l$.

In particular, the first derivative of $h_i(p)=f_{i,\lfloor i/2 \rfloor+1}-p$ is equal to
\begin{equation}
\label{eq 2'}
\frac{dh_i(p)}{dp}=(\lfloor \frac{i}{2}\rfloor+1) {i \choose \lfloor\frac{i}{2}\rfloor+1} p^{\lfloor\frac{i}{2}\rfloor} (1-p)^{i-\lfloor\frac{i}{2}\rfloor-1}-1.
\end{equation}

By Equation~(\ref{eq 2'}), $\frac{dh_i(p)}{dp}$ is equal to $-1$ for both $p=0$ and $p=1$. Thus, in a very close right neighborhood of $0$, we have $h_i(p)<h_i(0)=0$ and similarly in a very close left neighborhood of $1$, we have $h_i(p)>h_i(1)=0$. This implies that there exists at least one $\alpha_i \in (0,1)$ so that $h_i(\alpha_i)=0$, because $h_i$ is a continuous function, and it also demonstrates that there exist $p_1 \in (0,\alpha_i)$ and $p_2 \in (\alpha_i,1)$ such that $h_i(p_1)<0$ and $h_i(p_2)>0$.

It only remains to show that $\alpha_i$ is unique. We show that $h_i(p)$ has exactly three roots in $[0,1]$ which must be $0$, $1$, and $\alpha_i$. Assume on the contrary, it has more than three roots and let $\alpha^{(1)} <\alpha^{(2)}<\alpha^{(3)}<\alpha^{(4)}$ be four of them such that $\alpha^{(1)}=0$ and $\alpha^{(4)}=1$. Based on Rolle's Theorem~\cite{rosenlicht1968introduction}, there exist at least three intermediate values of $\beta^{(1)}$, $\beta^{(2)}$, and $\beta^{(3)}$ such that $\alpha^{(1)} < \beta^{(1)} < \alpha^{(2)} < \beta^{(2)} < \alpha^{(3)}< \beta^{(3)} < \alpha^{(4)}$ and $\frac{d}{dp}h_i(\beta^{(1)})=\frac{d}{dp}h_i(\beta^{(2)})=\frac{d}{dp}h_i(\beta^{(3)})=0$. Now, we argue that this is not possible.

A simple calculation shows that the second derivative of $h_i(p)$ is equal to
\[
\frac{d^2h_i(p)}{dp^2}=(\lfloor\frac{i}{2}\rfloor+1)(i-1){i \choose \lfloor\frac{i}{2}\rfloor+1}p^{\lfloor\frac{i}{2}\rfloor-1}(1-p)^{i-\lfloor\frac{i}{2}\rfloor-2}(\frac{\lfloor\frac{i}{2}\rfloor}{i-1}-p).
\] 
Therefore, $\frac{dh_i(p)}{dp}$ monotonically increases from $p=0$ to $p=\frac{\lfloor\frac{i}{2}\rfloor}{i-1}$ where it achieves a maximum, and then monotonically decreases from $\frac{\lfloor\frac{i}{2}\rfloor}{i-1}$ to $p=1$. Thus, $\frac{dh_i(p)}{dp}$ can take on the value $0$ (actually any value) at most twice for $0\leq p \leq 1$, and it implies that $h_i(p)$ has exactly three roots in the interval of $[0,1]$, which are $0$, $1$, and $\alpha_i$. \qed
\end{proof}

\paragraph{Proof of Theorem~\ref{theorem 5}.}
We discuss the case of $P_+(0)\geq \alpha_i+\epsilon$; the proof of the other case is analogous. We prove by starting from $P_+(0)\geq \alpha_i+\epsilon$ a.a.s. the negative seat probability decreases to a very small constant (phase 1), then the process reaches the negative seat probability at most $1/n^{\frac{1}{4}-\delta'}$ for some small constant $\delta'>0$ in $\mathcal{O}(\log\log n)$ rounds (phase 2). Finally, it reaches $\Sp$ in three more rounds (phase 3). 

Since the formula for $\mathbb{E}[P_{+}^p]$ is quite involved (see Equation~(\ref{eq 1})), we set a lower bound $Q(p)$ of the form $Kp^{\ell}$ on $\mathbb{E}[P_{+}^p]$, which is easier to handle but only holds for $p\in[\alpha_i+\epsilon,1-\epsilon']$ for some constant $\epsilon'>0$. By applying Lemma~\ref{lemma 5} and Corollary~\ref{lemma 6}, we show that the negative seat probability decreases to $\epsilon'$ a.a.s. This argument fails for $p>1-\epsilon'$. To continue, we find an upper bound $R(p)$ on $\mathbb{E}[P_{-}^p]$ and again apply Lemma~\ref{lemma 5} and Corollary~\ref{lemma 6} to prove the negative seat probability decreases to $1/n^{\frac{1}{4}-\delta'}$, in phase 2. This only works for $p\ge 1/n^{\frac{1}{4}-\delta'}$ since otherwise the error probability attained by Corollary~\ref{lemma 6} will be a constant. Interestingly, Markov's inequality~\cite{feller1968introduction} provides us with a stronger tail bound for $p<1/n^{\frac{1}{4}-\delta'}$. This allows us to prove the process reaches $\Sp$ in at most three more rounds, which is discussed in phase 3. 

\paragraph{Phase 1.} First, we show that the process reaches the negative seat probability of at most $\epsilon'$ for an arbitrarily small constant $\epsilon'>0$ by starting from $P_+(0)\ge \alpha_i+\epsilon$ after a constant number of rounds a.a.s.

We want to apply Lemma~\ref{lemma 5} (i). Thus, first we have to determine $Q(p)$ for which $\mathbb{E}[P_{+}^{p}]\ge Q(p)$. Define function $g_i(p):=\mathbb{E}[P_+^p]/p$, which tells us how the positive seat probability evolves in one round, in expectation. Let $\rho:=\min(g_i(p))$ be the minimum value of $g_i(p)$ for $p\in[\alpha_i+\epsilon,1-\epsilon']$. Therefore, $\mathbb{E}[P_+^p]=g_i(p)p\geq\rho p$. We set $\underline{b}=\alpha_i+\epsilon$, $\overline{b}=1-\epsilon'$, $b=\alpha_i+\epsilon$, and $Q(p)=Kp^{\ell}$ for $K=\rho$ and $\ell=1$.

Furthermore by Corollary \ref{lemma 6}, $P_{+}^p\geq (1-\delta)\mathbb{E}[P_+^p]$ for an arbitrary $\delta>0$ with probability at least $1-\exp(-\Theta(n\mathbb{E}[P_+^p]^2))$. Thus, we can set $I(\mathbb{E}[P_{+}^p])=\exp(-\Theta(n\mathbb{E}[P_+^p]^2))$.

Having Claim~\ref{claim 2} (whose proof is given below) in hand, we are now ready to apply Lemma~\ref{lemma 5}.
\begin{claim}
\label{claim 2}
There is some constant integer $m\ge 1$ such that $Q^{m}(\alpha_i+\epsilon)\ge (1-\epsilon')$ and for all $t<m$ we have $Q^{t}(\alpha_i+\epsilon)\in[\alpha_i+\epsilon,1-\epsilon']$.
\end{claim}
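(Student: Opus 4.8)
The plan is to exploit the explicit geometric form of the iterates $Q^t$ that results from the choice $Q(p)=\rho p$ (i.e.\ $K=\rho$, $\ell=1$). Since $Q^0(p)=p$ and $Q^t(p)=(1-\delta)\rho\,Q^{t-1}(p)$, an immediate induction gives
\[
Q^t(\alpha_i+\epsilon) = \bigl((1-\delta)\rho\bigr)^t (\alpha_i+\epsilon), \quad t\ge 0,
\]
so the whole claim reduces to understanding the single number $\gamma:=(1-\delta)\rho$.

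The key step---and the only place where the structure of the process enters---is to show that $\rho>1$. Recall $\rho=\min g_i(p)$ over $p\in[\alpha_i+\epsilon,1-\epsilon']$, where $g_i(p)=\mathbb{E}[P_+^p]/p=f_i(p)/p$. For every $p$ in this interval we have $p>\alpha_i$, so Lemma~\ref{lemma 3} yields $f_i(p)>p$ and hence $g_i(p)>1$. Because $g_i$ is continuous and the interval is compact (and nonempty once $\epsilon'$ is chosen small enough that $\alpha_i+\epsilon\le 1-\epsilon'$), the minimum is attained and strictly exceeds $1$; thus $\rho>1$. We may then fix $\delta>0$ small enough that $\gamma=(1-\delta)\rho>1$ still holds.

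With $\gamma>1$ in hand the rest is routine. The sequence $Q^t(\alpha_i+\epsilon)=\gamma^t(\alpha_i+\epsilon)$ is strictly increasing and unbounded, so there is a first index
\[
m=\Bigl\lceil \log_\gamma \tfrac{1-\epsilon'}{\alpha_i+\epsilon} \Bigr\rceil \ge 1
\]
at which it reaches or exceeds $1-\epsilon'$; since $\gamma,\alpha_i,\epsilon,\epsilon'$ are all constants, $m$ is a constant integer. For every $t<m$ the term $Q^t(\alpha_i+\epsilon)$ is at least $Q^0(\alpha_i+\epsilon)=\alpha_i+\epsilon$ because each step multiplies by $\gamma>1$, and it is strictly below $1-\epsilon'$ by minimality of $m$; hence $Q^t(\alpha_i+\epsilon)\in[\alpha_i+\epsilon,1-\epsilon']$, which is exactly what the claim requires.

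I expect the only genuine obstacle to be the inequality $\rho>1$: this is what converts the qualitative fixed-point picture of Lemma~\ref{lemma 3} into a quantitative, constant-factor expansion per round, and it is what forces the interval to be bounded away from both $\alpha_i$ and $1$. Once that strict inequality is available, everything collapses to the elementary fact that a geometric sequence with ratio greater than one escapes any interval in a constant number of steps.
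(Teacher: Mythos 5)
Your argument is correct and takes essentially the same route as the paper: both proofs establish $\rho>1$ by combining Lemma~\ref{lemma 3} ($f_i(p)>p$ on $(\alpha_i,1)$) with the Extreme Value Theorem on the compact interval $[\alpha_i+\epsilon,1-\epsilon']$, choose $\delta$ small enough that $\gamma=(1-\delta)\rho>1$, and then let the geometric growth $Q^t(\alpha_i+\epsilon)=\gamma^t(\alpha_i+\epsilon)$ yield the first index $m$ at which $1-\epsilon'$ is reached, with monotonicity handling the $t<m$ range. The only cosmetic difference is how constancy of $m$ is justified: the paper additionally bounds $\frac{1-\epsilon'}{\alpha_i+\epsilon}\le 2$ via $\alpha_i\ge 1/2$, whereas you observe directly that $m$ depends only on the constants $\gamma,\alpha_i,\epsilon,\epsilon'$ and not on $n$, which is equally valid.
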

Therefore, given $P_+(0)\ge\alpha_i+\epsilon$, we have $P_+(m)\geq Q^{m}(\alpha_i+\epsilon)\geq (1-\epsilon')$ with probability at least $1-\sum_{t=1}^{m}\exp(-\Theta(n(Q^t(\alpha_i+\epsilon))^2))$. This probability is larger than $1-\exp(-\Theta(n))$ since by Claim 2, $Q^t(\alpha_i+\epsilon)\ge \alpha_i+\epsilon$ for $1\le t\le m$ and $m$ is a constant.

Hence, starting from $P_+(0)\ge \alpha_i+\epsilon$, the process reaches the positive seat probability at least $1-\epsilon'$ in a constant number of rounds a.a.s.

\paragraph{Proof of Claim~\ref{claim 2}.} Let us first prove that $\rho$ is a constant larger than 1; recall that $\rho:=\min(g_i(p))$ for $p\in[\alpha_i+\epsilon,1-\epsilon']$, where $g_i(p)=\mathbb{E}[P_+^p]/p=f_i(p)/p$ (note $\mathbb{E}[P_+^p]=f_i(p)$ in this setting, see Definition~\ref{definition:function}). Since function $g_i(p)$ is a continuous function in the closed interval $[\alpha_i+\epsilon,1-\epsilon']$, based on Extreme Value Theorem \cite{rosenlicht1968introduction} it must attain a minimum, that is, there exists some value $p_{\min}$ such that $g_i(p_{\min})\leq g_i(p)$ for all $p\in[\alpha_i+\epsilon,1-\epsilon']$. Furthermore based on Lemma \ref{lemma 3}, we know $f_i(p)>p$ (i.e., $g_i(p)>1$) for $p\in[\alpha_i+\epsilon,1-\epsilon']$ which implies that $\rho =g_i(p_{\min})$ is equal to a constant lager than 1. 

Since $\rho$ is a constant larger than 1, we can select $\delta$ sufficiently small such that $(1-\delta)\rho$ is a constant larger than 1, which implies that 
\begin{equation}
\label{eq 5}
Q^{t}(p)=(1-\delta)\rho  Q^{t-1}(p)\ge Q^{t-1}(p)\quad \forall t\ge 1.
\end{equation}
Furthermore by part (iii) in Lemma~\ref{lemma 5}, we know that
\[
Q^{t}(\alpha_i+\epsilon)\ge (K(1-\delta))^{\sum_{j=0}^{t-1}\ell^j}b^{\ell^t}=(\rho(1-\delta))^{t}(\alpha_i+\epsilon)
\]
where we used that $\sum_{j=0}^{t-1}\ell^j=t$ and $\ell^{t}=1$ for $\ell=1$. Thus for $t'=\lceil\log_{(1-\delta)\rho}\frac{1-\epsilon'}{\alpha_i+\epsilon}\rceil$,
\begin{equation}
\label{eq 6}
Q^{t'}(\alpha_i+\epsilon)\ge 1-\epsilon'.
\end{equation} 
Putting Equation~(\ref{eq 5}) and Equation~(\ref{eq 6}) together implies that there exists some integer $m\le t'$ such that $Q^{m}(\alpha_i+\epsilon)\ge (1-\epsilon')$ and for all $t<m$ we have $Q^{t}\in[\alpha_i+\epsilon,1-\epsilon']$. It remains to prove that $m$ is a constant. We claim that $\frac{1-\epsilon'}{\alpha_i+\epsilon}\leq 2$ because $\alpha_i\geq 1/2$ (intuitively this should be clear because ties are in favor of negative opinion; however, for a formal proof please see part (i) in the proof of Theorem~\ref{theorem 9}). Thus, $m\leq t'\le \lceil \log_{(1-\delta)\rho}2\rceil$. Since $(1-\delta)\rho$ is a constant larger than 1, $m$ is also a constant.

\paragraph{Phase 2.} So far we proved that, by starting from $P_+(0)\ge \alpha_i+\epsilon$, after a constant number of rounds, say $t_0$, a.a.s. the process reaches the negative seat probability of at most $\epsilon'$, where $\epsilon'>0$ is an arbitrarily small constant. Now, we show that from $P_-(t_0)\le\epsilon'$, the process reaches the negative seat probability of at most $1/n^{\frac{1}{4}-\delta'}$ in $\mathcal{O}(\log\log n)$ rounds a.a.s. for sufficiently small constant $\delta'>0$, say $\delta'=1/23$.

Similar to phase 1, we want to apply Lemma~\ref{lemma 5}, but this time part (ii). Thus, we have to find an upper bound $R(p)$ on $\mathbb{E}[P_{-}^p]$. By Equation~(\ref{eq2}) for $a_i=1$ (i.e., all rooms are of the same size $i$), $\mathbb{E}[P_-^p]=\sum_{j=\lceil i/2\rceil}^{i}{i\choose j}p^j (1-p)^{i-j}$, which is equivalent to the probability that at least $\lceil i/2\rceil$ seats are negative in a room of size $i$ when we set each seat to be negative with probability $p$, independently. This is smaller than the probability that at least two seats are negative since $\lceil i/2\rceil\ge 2$ for $i\ge 3$. This probability is upper-bounded by ${i\choose 2}p^2\le i^2p^2$. Thus, $\mathbb{E}[P_-^p]\le i^2p^2$ for $p\in[0,1]$, and consequently for our desired range $[1/n^{\frac{1}{4}-\delta'},\epsilon']$. This means that the negative seat probability essentially squares in each round, so that we expect $O(\log\log n)$ round until it is polynomially small. 

Formally, Lemma~\ref{lemma 5} (ii) will prove this. Based on Corollary~\ref{lemma 6}, $P_-^p\leq (1+\delta)\mathbb{E}[P_-^p]$ with probability at least $1-\exp(-\Theta(n\mathbb{E}[P_-^p]^2))$. Therefore, we can set $\underline{b}=1/n^{\frac{1}{4}-\delta'}$, $\overline{b}=\epsilon'$, $b=\epsilon'$, $I(\mathbb{E}[P_{-}^p])=\exp(-\Theta(n\mathbb{E}[P_-^p]^2))$, $R(p)=Kp^{\ell}$ for $K=i^2$ and $\ell=2$. 

Based on phase 1, we can select the constant $\epsilon'$ to be arbitrarily small. We set $\epsilon'=\frac{1}{2i^3}$ and $\delta=1$. Moreover, we have Claim~\ref{claim 3}, whose proof is given below. The choice of the exponents of $n$ is governed by the error bound of Lemma~\ref{lemma 5} (ii), see also Remark~\ref{rem:quarter} below.

\begin{claim}
\label{claim 3}
There exists some integer $m\le \log\log n$ such that $R^t(\epsilon')\in[1/n^{\frac{1}{4}-\delta'},\epsilon'] $ for $t<m$ and $\frac{1}{n^{\frac{1}{2}-2\delta'}}\le R^m(\epsilon')\le \frac{1}{n^{\frac{1}{4}-\delta'}}$.
\end{claim}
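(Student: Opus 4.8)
The plan is to exploit the fact that in this regime the upper-bound recursion of Lemma~\ref{lemma 5} collapses to an exact, cleanly solvable double exponential. With $K=i^2$, $\ell=2$ and $\delta=1$, the definition $R^t(p)=(1+\delta)R(R^{t-1}(p))$ reads $R^t(p)=2i^2\bigl(R^{t-1}(p)\bigr)^2$, and for a pure monomial $R$ the formula in Lemma~\ref{lemma 5}(iii) is in fact an equality. First I would make the substitution $z_t:=2i^2R^t(\epsilon')$, which turns the recursion into $z_t=z_{t-1}^2$ with $z_0=2i^2\epsilon'=1/i$, so that $z_t=i^{-2^t}$ and hence
\[
R^t(\epsilon')=\frac{1}{2\,i^{\,2+2^t}},\qquad t\ge 0.
\]
In particular $R^t(\epsilon')$ is strictly decreasing in $t$, starting from $R^0(\epsilon')=\epsilon'$.

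Next I would define $m$ to be the smallest $t\ge 1$ for which $R^t(\epsilon')\le 1/n^{1/4-\delta'}$; this $m$ exists and satisfies $m\ge 1$ because $R^0(\epsilon')=\epsilon'$ is a fixed positive constant, which exceeds $1/n^{1/4-\delta'}$ for all large $n$. For every $t<m$ the minimality of $m$ gives $R^t(\epsilon')>1/n^{1/4-\delta'}$, while monotonicity gives $R^t(\epsilon')\le R^0(\epsilon')=\epsilon'$; together these place $R^t(\epsilon')$ in $[1/n^{1/4-\delta'},\epsilon']$, as required. The upper half of the final sandwich, $R^m(\epsilon')\le 1/n^{1/4-\delta'}$, is just the definition of $m$.

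For the lower half I would use the single squaring step together with the algebraic identity $\tfrac12-2\delta'=2\bigl(\tfrac14-\delta'\bigr)$. Since $m\ge 1$ and $R^{m-1}(\epsilon')>1/n^{1/4-\delta'}$ by minimality, the recursion yields
\[
R^m(\epsilon')=2i^2\bigl(R^{m-1}(\epsilon')\bigr)^2>2i^2\,n^{-2(1/4-\delta')}=2i^2\,n^{-(1/2-2\delta')}\ge\frac{1}{n^{1/2-2\delta'}},
\]
completing the sandwich. To bound $m$ I would solve the threshold condition $1/(2i^{2+2^m})\le n^{-(1/4-\delta')}$ from the closed form; taking logarithms base $2$ twice gives $m=\log_2\log_2 n+O(1)$, and since $(1/4-\delta')=19/92$ is a small constant with $(1/4-\delta')\ln 2<\ln i$ for every $i\ge 3$, the additive constant is non-positive for large $n$, so $m\le\log\log n$.

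The computations are routine; the one point that genuinely needs care is that the closed form must be used as an \emph{equality} and not merely the inequality recorded in Lemma~\ref{lemma 5}(iii), since the lower bound $R^m(\epsilon')\ge 1/n^{1/2-2\delta'}$ cannot be read off an upper bound—it relies on the exact squaring step and on $R^{m-1}(\epsilon')$ still lying above the threshold. The specific value $\delta'=1/23$ plays no role in this claim itself (any small constant would work here); it is dictated instead by the subsequent error summation $\sum_{t=1}^{m}I(R^t(\epsilon'))$ that Lemma~\ref{lemma 5}(ii) must keep $o(1)$.
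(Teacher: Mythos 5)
Your proof is correct and follows essentially the same route as the paper's: both define $m$ as the first time the iterate drops below $n^{-(1/4-\delta')}$, use the doubly-exponential decay of the squaring recursion $R^t(\epsilon')=2i^2\bigl(R^{t-1}(\epsilon')\bigr)^2$ to get $m\le\log\log n$ (the paper bounds $R^t(\epsilon')\le i^{-2^t}$ via Lemma~\ref{lemma 5}(iii) together with an induction showing $R^t(\epsilon')\le\epsilon'$, where your exact closed form $R^t(\epsilon')=1/(2i^{2+2^t})$ packages both facts at once), and both obtain the lower bound $R^m(\epsilon')\ge n^{-(1/2-2\delta')}$ from the single exact squaring step applied to $R^{m-1}(\epsilon')\ge n^{-(1/4-\delta')}$. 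Your side remark is also accurate: the paper's lower-bound step likewise relies on the recursion as an exact identity rather than on the one-sided bound of Lemma~\ref{lemma 5}(iii), and $\delta'=1/23$ is indeed only fixed for the later error summation, not for this claim.
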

Now we can apply Lemma~\ref{lemma 5} (ii), which tells us that given $P_{-}(t_0)\le \epsilon'$, we have $P_{-}(t_0+m)\le R^m(\epsilon')\le 1/n^{\frac{1}{4}-\delta'}$ with probability at least

\begin{align*}
& 1-\sum_{t=1}^{m}\exp(-\Theta(n(R^t(\epsilon'))^2))\ge1-\sum_{t=1}^{m-1}\exp(-\Theta(n(\frac{1}{n^{\frac{1}{4}-\delta'}})^2))-\exp(-\Theta(n(\frac{1}{n^{\frac{1}{2}-2\delta'}})^2))=\\
& 1-(m-1)\exp(-\Theta(n^{\frac{1}{2}+2\delta'}))-\exp(-\Theta(n^{4\delta'}))
\end{align*}
where we used from Claim~\ref{claim 3} that $ R^m(\epsilon')\ge \frac{1}{n^{\frac{1}{2}-2\delta'}}$ and $R^t(\epsilon')\ge n^{\frac{1}{4}-\delta'}$ for $t<m$. Applying $m-1<\log\log n$ yields that this probability is at least $1-o(1)$.

Overall, given $P_+(0)\ge \alpha_i+\epsilon$, the process reaches the negative seat probability $1/n^{\frac{1}{4}-\delta'}$ in $\mathcal{O}(\log\log n)$ rounds.

\begin{remark}
\label{rem:quarter}
Note that for this argument the choice of $1/n^{\frac{1}{4}-\delta'}$ for $\delta'>0$ is crucial. If we set $\delta'=0$, then the error term $\exp(-\Theta(n^{4\delta'}))$ from above will be a constant. 
\end{remark}
\paragraph{Proof of Claim~\ref{claim 3}.} Firstly, we show by induction that $R^t(\epsilon')\le \epsilon'$ for $t\ge 0$. For the base case, $R^0(\epsilon')=\epsilon'$. As the induction hypothesis (I.H.) assume for some $t\ge 0$, $R^t(\epsilon')\le \epsilon'$. By applying $R^{t+1}(\epsilon')=(1+\delta)R(R^t(\epsilon'))$ and $R(p)=i^2p^2$ and replacing the values of $\delta=1$ and $\epsilon'=\frac{1}{2i^3}$, we have
\[
R^{t+1}(\epsilon')=(1+\delta)R(R^t(\epsilon'))\le 2i^2(R^t(\epsilon'))^2\stackrel{\text{I.H.}}{\le} (2i^2\epsilon') \epsilon'=\frac{\epsilon'}{i}\le \epsilon'.
\]
Furthermore, by part (iii) in Lemma~\ref{lemma 5} for $K=i^2$, $\ell=2$, $\delta=1$ , and $\epsilon'=\frac{1}{2i^3}$ we get 
\[
R^t(\epsilon')\le (K(1+\delta))^{\sum_{j=0}^{t-1}\ell^j}(\epsilon')^{\ell^t}=(2i^2)^{\sum_{j=0}^{t-1}2^j}(\epsilon')^{2^t}\le (2i^2)^{2^{t}}(\frac{1}{2i^3})^{2^t}=\frac{1}{i^{2^t}}.
\]
 This implies that $R^{t'}(\epsilon')\le \frac{1}{n}\le \frac{1}{n^{\frac{1}{4}-\delta'}}$ for $t'=\log\log n$. By combining this statement with $R^t(\epsilon')\le \epsilon'$ for $t\ge 0$ from above, we conclude that there exists some $m\le \log\log n$ such that $R^t(\epsilon')\in[1/n^{\frac{1}{4}-\delta'},\epsilon'] $ for $t<m$ and $R^m(\epsilon')\le 1/n^{\frac{1}{4}-\delta'}$. 

It is left to prove the lower bound on $R^m(\epsilon')$. By utilizing $R^{m-1}(\epsilon')\ge 1/n^{\frac{1}{4}-\delta'}$, we get
\[
R^m(\epsilon')=(1+\delta)R(R^{m-1}(\epsilon'))\ge 2i^2(\frac{1}{n^{\frac{1}{4}-\delta'}})^2\ge\frac{1}{n^{\frac{1}{2}-2\delta'}}.
\]
\paragraph{Phase 3.} Now, we show that from negative seat probability at most $1/n^{\frac{1}{4}-\delta'}$, the process reaches $\Sp$ in three more rounds a.a.s. 

Recall that $P_{-}^p=n_{-}^p/n$ and by Markov's inequality for a non-negative variable $X$ and $a>0$, $\Pr[X\ge a\cdot \mathbb{E}[X]]\le 1/a$. Therefore, we have
\[
\Pr[P_-^p\geq n^{\delta'}\mathbb{E}[P_-^p]]=\Pr[n_-^p\geq n^{\delta'}\mathbb{E}[n_-^p]]\leq n^{-\delta'}
\]
Furthermore, $\mathbb{E}[P_-^p]\leq i^2 p^2$, as discussed above. Thus, with probability at least $1-n^{-\delta'}$
\begin{equation}
\label{eq 7}
P_-^p< n^{\delta'}i^2p^2\le n^{2\delta'}p^2,
\end{equation}
assuming $n$ is large enough to guarantee $n^{\delta'}i^2\leq n^{2\delta'}$.

This implies that for $p\le 1/n^{\frac{1}{4}-\delta'}$, a.a.s. $P_-^p\le 1/n^{\frac{1}{2}-4\delta'}$. Let us apply Equation~(\ref{eq 7}) two more times. If $p\le 1/n^{\frac{1}{2}-4\delta'}$, then a.a.s. $P_-^p\le 1/n^{1-10\delta'}$. Finally, for $p\le 1/n^{1-10\delta'}$, we have a.a.s. $P_-^p\le 1/n^{2-22\delta'}$. Therefore, if $P_-(t_1)\le 1/n^{\frac{1}{4}-\delta'}$ for some $t_1\ge 0$, then a.a.s. $P_-(t_1+3)\le 1/n^{2-22\delta'}$, i.e., $n_-(t_1+3)\le 1/n^{1-22\delta'}<1$ for $\delta'=1/23$. Hence, the process reaches $\Sm$ in three more rounds a.a.s. \qed 

\paragraph{Tightness.} 
We claim that the upper bound of $\mathcal{O}(\log\log n)$ on the consensus time of the process in Theorem~\ref{theorem 5} is asymptotically tight. Assume that $a_{i}=1$ for some $i\ge 3$ and let $\alpha_i$ be the unique fixed point of $f_{i}(p)$ in $(0,1)$. Let us consider part (i), a very similar argument works for part (ii). We know that by starting from $P_-(0)=\epsilon'$ for some constant $0<\epsilon'<1-\alpha_i$ the process reaches $\Sp$ a.a.s. in $\mathcal
{O}(\log\log n)$ rounds. We argue that after $(\log_{i}\log_2 n)/2$ rounds a.a.s. there is still at least one negative individual. By Equation~(\ref{eq2}), we know that for $a_{i}=1$, 
\[
\mathbb{E}[P_-^p]=\sum_{j=\lceil i/2\rceil}^{i}{i \choose j}p^j(1-p)^{i-j}\geq p^{i}
\] 
for $p\in[0,1]$ and by Corollary~\ref{lemma 6}, the inequality $P_-^p\geq (1-\delta)\mathbb{E}[P_-^p]$ holds with probability at least $1-\exp(-\Theta(n\mathbb{E}[P_-^p]^2))$. Therefore, our set-up matches the conditions of Lemma~\ref{lemma 5} by considering $\underline{b}=0$, $\overline{b}=1$, $b=\epsilon'$, $\delta=1/2$, $I(\mathbb{E}[P_{-}^{p}])=\exp(-\Theta(n\mathbb{E}[P_-^p]^2))$, and $Q(p)=Kp^{\ell}$ for $K=1$ and $\ell=i$. It is only left to set a suitable $m$, which we do in Claim~\ref{claim 4}.

\begin{claim}
\label{claim 4}
Let $m=(\log_i\log_2n)/2$, then
\[
\omega(\frac{1}{n^{1/3}})\le Q^m(\epsilon')\le \cdots \le Q^{t-1}(\epsilon')\le Q^{t}(\epsilon')\le \cdots \le Q^{0}(\epsilon')=\epsilon'.
\]
\end{claim}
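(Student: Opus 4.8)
The plan is to read off the closed-form lower bound from Lemma~\ref{lemma 5}(iii) and then exploit the sub-logarithmic growth of $i^m$ at $m=(\log_i\log_2 n)/2$. With the parameters fixed in the tightness paragraph, namely $K=1$, $\delta=1/2$, and $\ell=i$, part (iii) gives directly
\[
Q^t(\epsilon')\ge (K(1-\delta))^{\sum_{j=0}^{t-1}\ell^j}(\epsilon')^{\ell^t}=\left(\tfrac12\right)^{\frac{i^t-1}{i-1}}(\epsilon')^{i^t},\quad t\ge 1,
\]
using $\sum_{j=0}^{t-1}i^j=(i^t-1)/(i-1)$. This is the only quantitative input needed; everything else is manipulation of this single bound.

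For the monotonicity half of the chain I would give a one-line induction showing $Q^t(\epsilon')\le Q^{t-1}(\epsilon')\le\epsilon'$ for all $t\ge 1$. Indeed $Q^0(\epsilon')=\epsilon'<1$, and if $Q^{t-1}(\epsilon')<1$ then, since $Q(p)=p^i$ and $\delta=1/2$,
\[
Q^t(\epsilon')=\tfrac12\,(Q^{t-1}(\epsilon'))^i\le \tfrac12\,Q^{t-1}(\epsilon')\le Q^{t-1}(\epsilon'),
\]
where the first inequality uses $i\ge 3>1$ together with $Q^{t-1}(\epsilon')<1$. This simultaneously yields the descending ordering $\epsilon'=Q^0(\epsilon')\ge Q^1(\epsilon')\ge\cdots\ge Q^m(\epsilon')$ and keeps every iterate inside $[0,1)$, so the hypotheses of Lemma~\ref{lemma 5}(i) (with $\underline b=0$, $\overline b=1$) stay satisfied throughout.

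The substantive step is the lower bound $Q^m(\epsilon')=\omega(n^{-1/3})$. Substituting $m=(\log_i\log_2 n)/2$ gives $i^m=(\log_2 n)^{1/2}$, so writing $s:=\sqrt{\log_2 n}$ the displayed bound becomes
\[
Q^m(\epsilon')\ge 2^{-\frac{s-1}{i-1}}\,(\epsilon')^{s}=2^{-\left(\frac{1}{i-1}+|\log_2\epsilon'|\right)s+\frac{1}{i-1}}\ge 2^{-Cs},
\]
with $C:=\frac{1}{i-1}+|\log_2\epsilon'|$ a positive constant (recall $\epsilon'<1$, so $\log_2\epsilon'<0$). Comparing with $n^{-1/3}=2^{-(1/3)\log_2 n}$, the ratio is $2^{(1/3)\log_2 n-Cs}=2^{(\log_2 n)\left((1/3)-C/\sqrt{\log_2 n}\right)}\to\infty$, because $s=\sqrt{\log_2 n}=o(\log_2 n)$ drives the exponent to $+\infty$. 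Hence $Q^m(\epsilon')=\omega(n^{-1/3})$, closing the chain.

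The one place requiring care, and thus the main obstacle, is precisely this last asymptotic comparison: one must verify that the innocuous-looking factor $(\epsilon')^{i^m}$ does not drag the bound below $n^{-1/3}$, which works only because the exponent $i^m=\sqrt{\log_2 n}$ grows sub-logarithmically. Taking $m$ to be \emph{half} of $\log_i\log_2 n$ is exactly what creates the gap between $\sqrt{\log_2 n}$ and $\log_2 n$ and keeps $Q^m(\epsilon')$ safely above $n^{-1/3}$. With the claim in hand, feeding it into Lemma~\ref{lemma 5}(i) shows $P_-(m)\ge Q^m(\epsilon')=\omega(n^{-1/3})$ a.a.s.\ (the error $\sum_{t=1}^m I(Q^t(\epsilon'))$ is $o(1)$ since $n(Q^t(\epsilon'))^2\ge\omega(n^{1/3})$ and there are only $O(\log\log n)$ terms), whence $n_-(m)=\omega(n^{2/3})\ge 1$ and a negative individual survives, establishing the matching $\Omega(\log\log n)$ lower bound.
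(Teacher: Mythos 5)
Your proof is correct and follows essentially the same route as the paper: both invoke Lemma~\ref{lemma 5}(iii) with $K=1$, $\ell=i$, $\delta=1/2$, establish the descending chain via $Q^t(\epsilon')=\tfrac12(Q^{t-1}(\epsilon'))^i\le Q^{t-1}(\epsilon')$, and conclude by comparing exponents, using that $i^m=\sqrt{\log_2 n}=o(\log_2 n)$. The only cosmetic differences are that the paper simplifies via $\sum_{j=0}^{m-1}\ell^j\le\ell^m$ to get $Q^m(\epsilon')\ge(\epsilon'/2)^{i^m}$ rather than carrying the exact geometric sum, and that you make explicit the (implicit in the paper) induction keeping the iterates below $1$.
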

Therefore by Lemma~\ref{lemma 5} (i), starting from $P_-(0)=\epsilon'$, after $m=(\log_i\log_2n)/2$ rounds the negative seat probability is at least $Q^m(\epsilon')\ge \omega(1/n^{1/3})$ (i.e., the process does not reach $\Sp$) with probability at least $1-\sum_{t=1}^{m}\exp(-\Theta(n(Q^t(\epsilon'))^2))$. By applying Claim~\ref{claim 4},

\begin{align*}
& 1-\sum_{t=1}^{m}\exp(-\Theta(n(Q^t(\epsilon'))^2))\ge 1-\sum_{t=1}^{m}\exp(-\Theta(n(Q^m(\epsilon'))^2))\ge 1-m\cdot \exp(-\Theta(n\cdot\omega(1/n^{1/3})^2))=\\
& 1-\frac{\log_{i}\log_2 n}{2} \cdot\exp(-\omega(n^{1/3}))=1-o(1).
\end{align*}

\paragraph{Proof of Claim~\ref{claim 4}.}
We know that $Q^{0}(\epsilon')=Q(\epsilon')=\epsilon'$. Since $Q(p)=p^i$, by applying the definition of $Q^{t}(p)$ and setting $\delta=1/2$ we have for $t\ge 1$,
\[
Q^{t}(\epsilon')=(1-\delta)Q(Q^{t-1}(\epsilon'))=\frac{1}{2}(Q^{t-1}(\epsilon'))^i\le Q^{(t-1)}(\epsilon').
\] 
Furthermore, by part (iii) in Lemma~\ref{lemma 5} for $K=1$, $\ell=i$ and applying $\sum_{j=0}^{m-1}\ell^j\le \ell^{m}$, we get
\[
Q^{m}(\epsilon')\ge (K(1-\delta))^{\sum_{j=0}^{m-1}\ell^j}(\epsilon')^{\ell^m}\ge (\epsilon'/2)^{\ell^m}=(\epsilon'/2)^{i^m}.
\] 
Since $m=(\log_i\log_2n)/2$,
\[
Q^{m}(\epsilon')\ge (\epsilon'/2)^{i^m}=(\epsilon'/2)^{\sqrt{\log_2n}}=\omega((1/2)^{\log_2n^{1/3}})=\omega(1/n^{1/3}).
\]
\qed
 
\subsection{General Case: Rooms of Different Sizes}
\label{general-case}
In this section, we switch our attention to the general case with rooms of different sizes. In Theorem~\ref{theorem 7} we prove that the process exhibits a threshold behavior similar to the case of all rooms of the same size, provided that $\S\leq 16$. We should point out that this result perhaps holds also for $\S>16$, but this constraint originates from the limitation of our proof technique. 

Again, the main idea is to prove that $f(p)=\mathbb{E}[P_+^p]$ has a unique fixed point $\alpha$ in $(0,1)$ and then apply our concentration results from Section~\ref{ingredients}. However, our proof regarding the uniqueness of the fixed point only applies to $\S\le 16$. Therefore, to eliminate the constraint of $\S\leq 16$, one shall prove $f(p)$ has a unique fixed point in the interval of $(0,1)$ for any constant $\S$. This is an interesting problem by its own sake since $f(p)$ is actually the convex combination of binomial functions $f_i(p)$. 

\begin{theorem}
\label{theorem 7}
In the Galam model with all rooms of size 3 to 16, for any $\epsilon>0$

(i) $P_+(0)\geq\alpha+\epsilon$ results in $\Sp$.
 
(ii) $P_{+}(0)\leq\alpha-\epsilon$ results in $\Sm$
\\
in $\mathcal{O}(\log\log n)$ rounds a.a.s., where $\alpha$ is the unique fixed point of function $f(p)$ in $(0,1)$.
\end{theorem}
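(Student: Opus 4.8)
The plan is to mirror the proof of Theorem~\ref{theorem 5} almost verbatim, replacing the single-room-size polynomial $f_i(p)$ by the convex combination $f(p)=\sum_i a_i f_i(p)$, which equals $\mathbb{E}[P_+^p]$ in the general setting. The entire three-phase skeleton of Theorem~\ref{theorem 5} goes through unchanged once we establish the following three facts about $f$: (a) $f$ has a \emph{unique} fixed point $\alpha\in(0,1)$ with $f(p)<p$ on $(0,\alpha)$ and $f(p)>p$ on $(\alpha,1)$; (b) on any closed interval $[\alpha+\epsilon,1-\epsilon']$ the ratio $g(p):=f(p)/p$ attains a minimum $\rho>1$, giving the linear lower bound $\mathbb{E}[P_+^p]\ge\rho p$ needed to drive Phase~1; and (c) the quadratic upper bound $\mathbb{E}[P_-^p]\le C p^2$ for a constant $C$ (since every $f_i$ with $i\ge 3$ contributes a probability of ``at least two negative seats,'' which is $O(p^2)$), needed for the doubly-exponential drift in Phase~2 and the Markov-inequality argument in Phase~3. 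Facts (b) and (c) follow immediately from the Extreme Value Theorem and a crude union bound exactly as in Section~\ref{room-size-3}, so the only genuinely new work is fact (a).

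First I would isolate fact (a) as a separate lemma, the general-case analogue of Lemma~\ref{lemma 3}. As in that lemma it suffices to show $h(p)=f(p)-p$ has exactly three roots in $[0,1]$, namely $0$, $1$, and a unique $\alpha\in(0,1)$. The endpoint behavior is free: each $h_i$ satisfies $h_i(0)=h_i(1)=0$ with $h_i'(0)=h_i'(1)=-1$ (from Equation~(\ref{eq 2'}), since $\lfloor i/2\rfloor\ge 1$ forces the polynomial term to vanish and its derivative to vanish at the endpoints for $i\ge 3$), so $h'(0)=h'(1)=-1$ as well, which guarantees $h<0$ just right of $0$ and $h>0$ just left of $1$ and hence at least one interior root. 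The crux is \emph{uniqueness}. The single-size proof used the clean fact that $h_i''$ changes sign exactly once, so $h_i'$ is unimodal and can hit $0$ at most twice, forcing exactly three roots via Rolle's Theorem. For a convex combination this monotone-second-derivative structure is destroyed, and that is precisely where I expect the main obstacle to lie.

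The hard part will be controlling the number of interior roots of $h=\sum_i a_i h_i$ when the individual unimodality is lost. The strategy I would pursue is to bound the number of sign changes of a derivative of $h$ and push that bound back through repeated applications of Rolle's Theorem. Concretely, $h'(p)=\sum_i a_i h_i'(p)-1$, and by Equation~(\ref{eq 2'}) each $h_i'(p)+1$ is a constant multiple of $p^{\lfloor i/2\rfloor}(1-p)^{\lceil i/2\rceil-1}$. Writing $H(p):=h'(p)+1=\sum_i a_i\,c_i\,p^{\lfloor i/2\rfloor}(1-p)^{\lceil i/2\rceil-1}$ with $c_i>0$, the plan is to show $H(p)-1=h'(p)$ has at most two roots in $(0,1)$; then $h'$ has at most two interior zeros, so by Rolle $h$ has at most three interior zeros, and together with the forced roots at the endpoints a careful counting (as in Lemma~\ref{lemma 3}) pins the interior root count down to exactly one. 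Bounding the zeros of $h'$ amounts to a Descartes-type sign-change estimate on the polynomial $H(p)-1$ after substituting $p=x/(1+x)$ to clear the $(1-p)$ factors; this reduces the question to counting positive real roots of an explicit polynomial whose degree is at most $\S-1\le 15$. This is exactly the step where the restriction $\S\le 16$ enters: for $\S\le 16$ the resulting sign-variation bound (or an equivalent finite case check over the range of admissible room-size distributions) can be verified to yield at most two roots, whereas for larger $\S$ the crude sign-change count no longer forces uniqueness, and a sharper argument would be required.

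Having established the lemma, the remainder is bookkeeping that copies Section~\ref{room-size-3} with $f_i\rightsquigarrow f$, $\alpha_i\rightsquigarrow\alpha$: Phase~1 uses $Q(p)=\rho p$ (so $\ell=1$) and Claim~\ref{claim 2}'s argument to reach negative density $\epsilon'$ in $O(1)$ rounds via Lemma~\ref{lemma 5}(i) and Corollary~\ref{lemma 6}; Phase~2 uses $R(p)=C p^2$ (so $\ell=2$) and the Claim~\ref{claim 3} computation to drive the negative density to $1/n^{1/4-\delta'}$ in $O(\log\log n)$ rounds via Lemma~\ref{lemma 5}(ii); and Phase~3 applies Markov's inequality three times exactly as before to eliminate the last negative seat, absorbing into $\Sp$. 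The symmetric case (ii) follows by the same reasoning applied to $P_-^p$, using $f(p)<p$ on $(0,\alpha)$. I would also note that $\alpha\ge 1/2$ in general, with equality iff all room sizes are odd, which is the content deferred to Theorem~\ref{theorem 9}; this is what keeps the constant $m$ in the Phase~1 analogue of Claim~\ref{claim 2} bounded.
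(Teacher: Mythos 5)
Your overall architecture is the paper's: the three-phase machinery of Theorem~\ref{theorem 5} is indeed reused essentially verbatim (Phase~1 with a linear lower bound $Q(p)=\rho p$ obtained from the Extreme Value Theorem, Phases~2 and~3 with $\mathbb{E}[P_-^p]\le \S^2 p^2$ in place of $i^2p^2$, and Markov's inequality at the end), and you correctly isolate the uniqueness of the fixed point of $f$ as the only genuinely new content, with the endpoint analysis $h'(0)=h'(1)=-1$ matching Equation~(\ref{eq 2'}).

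The gap is in your mechanism for uniqueness. You propose to bound the zeros of $h'$ directly via a Descartes sign-variation count on $H(p)-1$ after the substitution $p=x/(1+x)$, ``or an equivalent finite case check over the range of admissible room-size distributions.'' But the coefficients of the resulting degree-$(\S-1)$ polynomial are linear in the $a_i$, which range over a continuum (a simplex), so there is no finite list of cases to check; and Descartes' rule gives no uniform bound of two sign changes over this simplex --- the number of real zeros of a convex combination $\sum_i a_i h_i'$ is simply not controlled by the zero counts of the summands, which is exactly the difficulty you yourself flag when you note that individual unimodality is destroyed. As stated, your plan does not close, and it also misattributes where $\S\le 16$ enters. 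The paper's resolution is to work one derivative deeper, with sign conditions that \emph{are} closed under convex combination: for every $3\le i\le 16$, (a) $h_i''>0$ on $(0,1/2)$, (b) $h_i''<0$ on $(2/3,1)$, and (c) $h_i''$ is monotonically decreasing on $[1/2,2/3]$, the last proved by showing $h_i'''<0$ there --- immediate for odd $i$, verified computationally for even $i\le 16$, and not necessarily true for even $i>16$, which is the actual source of the restriction. Since each of (a)--(c) survives the convex combination $h''=\sum_i a_i h_i''$, the combined $h''$ changes sign exactly once, so $h'$ is unimodal and has at most two interior zeros, and the Rolle's Theorem count from Lemma~\ref{lemma 3} then forces exactly one interior root of $h$, uniformly in the distribution $(a_1,\ldots,a_{\S})$ --- the uniformity your sign-variation route cannot deliver.
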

\begin{proof}
We first prove that function $h(p)=f(p)-p$ (see Definition~\ref{definition:function}) has a unique root $\alpha$ in the interval of $(0,1)$ and $h(p)<0$ for $p\in(0,\alpha)$ and $h(p)>0$ for $p\in(\alpha,1)$. Note this implies that function $f(p)$ has a unique fixed point in $(0,1)$.

Based on Equation~(\ref{eq 2'}) we know that the first derivative of $h(p)$ is equal to
\[
\frac{dh(p)}{dp}=\sum_{i=3}^{16}a_i(\lfloor \frac{i}{2}\rfloor+1) {i \choose \lfloor\frac{i}{2}\rfloor+1} p^{\lfloor\frac{i}{2}\rfloor} (1-p)^{i-\lfloor\frac{i}{2}\rfloor-1}-1.
\] 

This implies that $\frac{dh(p)}{dp}$ is equal to $-1$ for both $p=0$ and $p=1$. Thus, in a very close right neighborhood of $0$, we have $h(p)<h(0)=0$ and similarly in a very close left neighborhood of $1$, we have $h(p)>h(1)=0$. Hence, there exists at least one $\alpha \in (0,1)$ so that $h(\alpha)=0$.

Now, by exploiting the same argument which we had in the proof of Lemma~\ref{lemma 3}, i.e., applying Rolle's Theorem\cite{rosenlicht1968introduction}, we just have to show that $\frac{d^2h(p)}{dp^2}$ changes its sign at most once in the interval of $(0,1)$. We prove that for all $3\le i\le 16$,
\begin{enumerate}[label=(\alph*)]
\item $\frac{d^2h_i(p)}{dp^2}$ is positive for $0< p < 1/2$
\item $\frac{d^2h_i(p)}{dp^2}$ is negative for $2/3<p< 1$
\item $\frac{d^2h_i(p)}{dp^2}$ is monotonically decreasing for $1/2 \leq p \leq 2/3$.
\end{enumerate}
Since $\frac{d^2h(p)}{dp^2}=\sum_{i=3}^{16}a_i\frac{d^2h_i(p)}{dp^2}$, combining (a), (b), and (c) yields that there exists a point $p'\in(0,1)$ such that $\frac{d^2h(p)}{dp^2}$ is positive for $p\in(0,p')$ and negative for $p\in(p',1)$. 

By using Equation~(\ref{eq 2'}), we have
\[
\frac{d^2h_i(p)}{dp^2}=(\lfloor\frac{i}{2}\rfloor+1)(i-1){i \choose \lfloor\frac{i}{2}\rfloor+1}p^{\lfloor\frac{i}{2}\rfloor-1}(1-p)^{i-\lfloor\frac{i}{2}\rfloor-2}(\frac{\lfloor\frac{i}{2}\rfloor}{i-1}-p).
\]
Thus, the term $(\frac{\lfloor\frac{i}{2}\rfloor}{i-1}-p)$ determines the sign of $\frac{d^2h_i(p)}{dp^2}$. For $i\ge 3$, the maximum and the minimum value of $\frac{\lfloor\frac{i}{2}\rfloor}{i-1}$ are respectively $\frac{2}{3}$ (by setting $i=4$) and $\frac{1}{2}$ (by setting $i=3$). Therefore, for all $i\ge 3$, $\frac{d^2h_i(p)}{dp^2}$ is positive if $0< p <1/2$, and negative if $2/3<p< 1$.

It is left to prove the correctness of part (c). It suffices to show that $\frac{d^3h_i(p)}{dp^3}$ is negative for $1/2\leq p\leq 2/3$. The third derivative of $h_i(p)$ is equal to
\[
\frac{d^3h_i(p)}{dp^3}=(\lfloor\frac{i}{2}\rfloor+1)(i-1){i \choose \lfloor\frac{i}{2}\rfloor+1}p^{\lfloor\frac{i}{2}\rfloor-2}(1-p)^{i-\lfloor\frac{i}{2}\rfloor-3}\times
\] 
\[
((\lfloor\frac{i}{2}\rfloor-1)(1-p)(\frac{\lfloor\frac{i}{2}\rfloor}{i-1}-p)-p(i-\lfloor\frac{i}{2}\rfloor-2)(\frac{\lfloor\frac{i}{2}\rfloor}{i-1}-p)-p(1-p)).
\] 
Since all terms except the last one are positive for $i\ge 3$, the necessary and sufficient condition for $\frac{d^3h_i(p)}{dp^3}$ to be negative is that the following inequality holds.
\[(\lfloor\frac{i}{2}\rfloor-1)(1-p)(\frac{\lfloor\frac{i}{2}\rfloor}{i-1}-p)<p(i-\lfloor\frac{i}{2}\rfloor-2)(\frac{\lfloor\frac{i}{2}\rfloor}{i-1}-p)+p(1-p)
\]
For $i$ odd, this is obviously true because $i-\lfloor i/2 \rfloor -2=\lfloor i/2 \rfloor -1$ and $1-p \leq p$, and for even $i\leq 16$ also one can check by a simple computational software that it is correct. (It is worth to stress that it is exactly where our proof technique fails for $\S\ge 16$ since part (c) is not necessarily true for $\S>16$.)

Since Corollary \ref{lemma 6} and Lemma~\ref{lemma 5} also cover the setting of this theorem, the rest of the proof is very similar to the proof of Theorem \ref{theorem 5}. Phase 1 is directly applied. In phase 2 and phase 3, by utilizing $\mathbb{E}[P_-^p]\leq \S^2 p^2$ instead of $\mathbb{E}[P_-^p]\leq i^2 p^2$, the same arguments are followed. \qed
\end{proof}

\subsection{All Rooms of Size 2}
\label{room-size-2}
In this section, we consider the case of $a_2=1$, where all rooms are of size 2. In this setting, $\mathbb{E}[P^{p}_{+}]=p^2$ and function $f_2(p)=p^2$ is strictly smaller than $p$ for $p\in(0,1)$, which implies that $f_2(p)$ has no fixed point. Thus, we do not expect the process to exhibit a threshold behavior similar to the case of $i\ge 3$. Actually, one might expect the process to reach $\Sm$ by starting from any non-zero negative seat probability; we prove such result in Theorem \ref{theorem 2}. More precisely, we show that by starting from the negative seat probability $P_-(0)=\omega(1/n)$, the process reaches $\Sm$ after $\mathcal{O}(\log n)$ rounds a.a.s. Naturally, one might ask: Is the logarithmic bound on the consensus time tight or it can be replaced by the bound $\mathcal{O}(\log\log n)$ similar to $i\ge 3$? Does the statement hold for $P_-(0)=\mathcal{O}(1/n)$? We show that the logarithmic upper bound on the consensus time is tight. Furthermore, if $P_-(0)=c/n$ for a constant $c$, then the probability that all seats are positive initially is equal to
\[
(1-\frac{c}{n})^{n}\geq \frac{1}{4^c}
\]  
where we used the estimate $1-x\ge 4^{-x}$, which holds for $0\le x\le 1/2$. Thus, with a constant probability the process reaches $\Sp$ in the first round.

The main idea for the proof of Theorem~\ref{theorem 2} is again to apply Lemma~\ref{lemma 5} by selecting the suitable function $Q(p)$. However, this time instead of the concentration result from Corollary~\ref{lemma 6}, based on Azuma's inequality, we utilize the Chernoff bound~\cite{feller1968introduction} to provide Lemma~\ref{lemma 2}. Let us recall that based on the Chernoff bound, if $x_1, \cdots,x_n$ are independent Bernoulli random variables then for $X=\sum_{i=1}^{n}x_i$ and any constant $\delta>0$,
\[
\Pr[(1-\delta)\mathbb{E}[X]< X< (1+\delta)\mathbb{E}[X]]\geq 1-\exp(-\Theta(\mathbb{E}[X])).
\] 
\begin{lemma}
\label{lemma 2}
In the Galam model with $a_2=1$ and for any constant $\delta>0$,
\[
\Pr[(1-\delta)\mathbb{E}[P_{\pm}^p]<P_{\pm}^p<(1+\delta)\mathbb{E}[P_{\pm}^p]]\geq 1-\exp(-\Theta(n\mathbb{E}[P_{\pm}^p])).
\]
\end{lemma}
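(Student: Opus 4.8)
The plan is to replace the Azuma-based argument of Corollary~\ref{lemma 6} with a direct Chernoff bound, which is possible because when $a_2=1$ the quantity $n_{\pm}^p$ is, up to a fixed scaling, a genuine sum of independent Bernoulli trials rather than merely a Lipschitz function of the individual seat variables. This is exactly what yields the stronger exponent $n\mathbb{E}[P_{\pm}^p]$ in place of the $n\mathbb{E}[P_{\pm}^p]^2$ obtained from Azuma, and it is the strengthening needed for the $\mathcal{O}(\log n)$ consensus time in the size-$2$ regime.

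First I would treat $P_+^p$. Since all rooms have size $2$, there are $n/2$ rooms, and a room ends the round positive precisely when both of its seats are made positive, which happens independently across rooms with probability $p^2$ (distinct rooms use disjoint seats, and seat opinions are assigned independently). Let $Y_1,\dots,Y_{n/2}$ be the corresponding indicator variables and set $X:=\sum_k Y_k$, the number of positive rooms. Then $n_+^p=2X$ and hence $P_+^p=2X/n$, so that $\mathbb{E}[P_+^p]=2\,\mathbb{E}[X]/n=p^2$ and $\mathbb{E}[X]=(n/2)\mathbb{E}[P_+^p]=\Theta(n\mathbb{E}[P_+^p])$.

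Next I would apply the Chernoff bound to $X$, a sum of independent Bernoulli variables. The crucial observation is that the two-sided multiplicative deviation event is invariant under positive scaling: the event $\{(1-\delta)\mathbb{E}[P_+^p]<P_+^p<(1+\delta)\mathbb{E}[P_+^p]\}$ coincides with $\{(1-\delta)\mathbb{E}[X]<X<(1+\delta)\mathbb{E}[X]\}$, since $P_+^p$ and $\mathbb{E}[P_+^p]$ both arise from multiplying $X$ and $\mathbb{E}[X]$ by the same factor $2/n$. Chernoff therefore gives the bound $1-\exp(-\Theta(\mathbb{E}[X]))=1-\exp(-\Theta(n\mathbb{E}[P_+^p]))$, as required.

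Finally, the case of $P_-^p$ is entirely analogous. Counting negative rooms instead, a size-$2$ room is negative unless both its seats are positive, so when each seat is negative independently with probability $p$ a room is negative with probability $1-(1-p)^2$; letting $Z$ be the number of negative rooms gives $n_-^p=2Z$, $\mathbb{E}[P_-^p]=2\,\mathbb{E}[Z]/n$, and $\mathbb{E}[Z]=\Theta(n\mathbb{E}[P_-^p])$, after which the same scale-invariant Chernoff argument applies verbatim. There is no serious obstacle here; the only points requiring care are the independence of the room indicators (immediate from disjointness of the rooms) and the scale-invariance that lets one pass from the integer-valued room counts, to which Chernoff applies directly, back to the fractions $P_{\pm}^p$.
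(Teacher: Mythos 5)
Your proposal is correct and takes essentially the same route as the paper's own proof: both decompose the count of positive (or negative) seats into $n/2$ independent room-level Bernoulli indicators (a size-$2$ room is positive iff both seats are positive), apply the Chernoff bound to that sum, and rescale by $2/n$ via $n_{\pm}^p = nP_{\pm}^p$ to transfer the multiplicative deviation bound to $P_{\pm}^p$. The only differences are cosmetic: you write out the negative case and the scale-invariance of the deviation event explicitly, where the paper disposes of them with ``analogously'' and the final substitution.
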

\begin{proof} 
We prove the statement for $P_+^p$ and the case of $P_-^p$ can be proven analogously. Consider an arbitrary labeling from $1$ to $n/2$ on the rooms. Assume that each seat is positive with probability $p$, independently. Define the Bernoulli random variable $x_i^p$, for $1\leq i\leq n/2$, to be one if and only if both seats in the $i$-th room are positive. Thus, by definition we have $n_+^p/2=\sum_{i=1}^{n/2}x_i^p$. Since random variables $x_1^p,\cdots,x_{n/2}^p$ are independent, using the Chernoff bound yields
\[
\Pr[(1-\delta)\mathbb{E}[n_+^p]<n_+^p<(1+\delta)\mathbb{E}[n_+^p]]\geq 1-\exp(-\Theta(\mathbb{E}[n_+^p])).
\]
Applying $n_{+}^p=nP_{+}^{p}$ finishes the proof. 
\qed 
\end{proof}
\begin{remark}
Note that the error probability, obtained by applying Azuma's inequality, in Corollary~\ref{lemma 6} is of form $\exp(-\Theta(n\mathbb{E}[P_{\pm}^p]^2))$ while in Lemma~\ref{lemma 2}, built on the Chernoff bound, the error probability is $\exp(-\Theta(n\mathbb{E}[P_{\pm}^p]))$. The latter gives a smaller error probability for $\mathbb{E}[P_{\pm}^p]=o(1)$, which is crucial for the proof of Theorem~\ref{theorem 2}. 
\end{remark}
\begin{theorem}
\label{theorem 2}
In the Galam model with $a_2=1$ and $P_-(0)=\omega(1/n)$, the process reaches $\Sm$ in $\mathcal{O}(\log n)$ rounds a.a.s.
\end{theorem}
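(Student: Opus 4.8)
The plan is to track the negative seat probability and show it drifts monotonically to $1$. With $a_2=1$ a room of size $2$ ends up negative unless both its seats are positive, so starting a round from negative seat probability $p$ we have, by Equation~(\ref{eq2}),
\[
\mathbb{E}[P_-^p] = 1-(1-p)^2 = p(2-p).
\]
This exceeds $p$ for every $p\in(0,1)$, with fixed points only at $0$ and $1$, so in expectation $P_-$ increases each round. The subtlety is the \emph{rate}: for small $p$ we have $\mathbb{E}[P_-^p]\approx 2p$, i.e.\ genuine multiplicative growth by a constant factor close to $2$, whereas near $p=1$, writing $q=1-P_-=P_+$ for the positive ``gap'', the gap squares ($\mathbb{E}[P_+^q]=q^2$). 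I would therefore split the argument into three phases mirroring the proof of Theorem~\ref{theorem 5}: a linear phase driving $P_-$ from $\omega(1/n)$ up to a small constant, a squaring phase driving the gap $P_+$ down to a polynomially small value, and a finishing phase reaching $\Sm$ exactly. Because the first phase grows $P_-$ only by a constant factor per round (rather than squaring the complementary gap), it needs $\Theta(\log n)$ rounds, which is precisely what produces the $\mathcal{O}(\log n)$ bound, in contrast to the $\mathcal{O}(\log\log n)$ of Theorem~\ref{theorem 5}.

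For the linear phase I would apply Lemma~\ref{lemma 5}(i) to $P_-$ with the lower bound $Q(p)=Kp^{\ell}$, $K=2-\epsilon'$, $\ell=1$, valid on $[\underline{b},\overline{b}]=[P_-(0),\epsilon']$ since $p(2-p)\ge(2-\epsilon')p$ for $p\le\epsilon'$. Choosing $\delta,\epsilon'$ small enough that $K(1-\delta)=(2-\epsilon')(1-\delta)>1$, part~(iii) of Lemma~\ref{lemma 5} gives $Q^t(b)\ge (K(1-\delta))^t\,P_-(0)$, so there is an $m=\mathcal{O}(\log(\epsilon'/P_-(0)))=\mathcal{O}(\log n)$ with $Q^m(P_-(0))\ge\epsilon'$ and $Q^t(P_-(0))\in[\underline{b},\epsilon']$ for $t<m$ (the analogue of Claim~\ref{claim 2}). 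The decisive point is the choice of tail bound: I would feed Lemma~\ref{lemma 5} the \emph{Chernoff}-based function $I(\mathbb{E}[P_-^p])=\exp(-\Theta(n\mathbb{E}[P_-^p]))$ from Lemma~\ref{lemma 2}, not the Azuma bound of Corollary~\ref{lemma 6}. The resulting failure probability is $\sum_{t=1}^m\exp(-\Theta(nQ^t(P_-(0))))$, and since $nQ^t(P_-(0))\ge (K(1-\delta))^t\,nP_-(0)$ with $nP_-(0)=\omega(1)$ and $K(1-\delta)>1$, this sum is dominated by its first term $\exp(-\Theta(nP_-(0)))=o(1)$; the geometric blow-up of the exponent in $t$ makes the remaining terms negligible even though there are $\mathcal{O}(\log n)$ of them.

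Once $P_-\ge\epsilon'$, equivalently $P_+\le 1-\epsilon'$, I would switch to tracking the gap $P_+$, which satisfies $\mathbb{E}[P_+^p]=p^2$, and reuse Phases~2 and~3 of Theorem~\ref{theorem 5} with the roles of the two opinions exchanged: Lemma~\ref{lemma 5}(ii) with $R(p)=p^2$ (and $\delta$ small enough that $(1+\delta)(1-\epsilon')<1$) drives $P_+$ down double-exponentially to a polynomially small value in $\mathcal{O}(\log\log n)$ rounds, after which a constant number of Markov's-inequality steps (exactly as in Phase~3) bring $n_+$ below $1$, i.e.\ to $\Sm$. Adding the three phase lengths gives $\mathcal{O}(\log n)+\mathcal{O}(\log\log n)+\mathcal{O}(1)=\mathcal{O}(\log n)$ rounds a.a.s.

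The main obstacle is concentrated entirely in the first phase and its interaction with the hypothesis $P_-(0)=\omega(1/n)$. Writing $nP_-(0)=\omega(1)$, which may tend to infinity arbitrarily slowly, the first-round expected progress is $\mathbb{E}[P_-^{P_-(0)}]=\Theta(P_-(0))$, so Azuma's bound from Corollary~\ref{lemma 6}, namely $\exp(-\Theta(n\mathbb{E}[P_-^p]^2))=\exp(-\Theta((nP_-(0))^2/n))$, evaluates to only $\Theta(1)$ when $nP_-(0)$ grows slowly, which would break the union bound. The stronger Chernoff bound of Lemma~\ref{lemma 2}, $\exp(-\Theta(n\mathbb{E}[P_-^p]))=\exp(-\Theta(nP_-(0)))=o(1)$, is exactly what rescues the argument; it is available precisely because size-$2$ rooms turn $n_+^p$ into a sum of independent Bernoulli variables, and it is also the reason the hypothesis reads $\omega(1/n)$ rather than $\Omega(1/n)$. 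I would finally verify the Phase~1 bookkeeping: that Lemma~\ref{lemma 5} permits $\underline{b}=P_-(0)$ to depend on $n$ (as it already does in Theorem~\ref{theorem 5}'s Phase~2), and that the stopping index $m$ is well defined.
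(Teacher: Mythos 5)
Your proposal is correct and follows essentially the same route as the paper: the same three-phase decomposition via Lemma~\ref{lemma 5}, the same crucial substitution of the Chernoff-based tail bound of Lemma~\ref{lemma 2} for the Azuma-based Corollary~\ref{lemma 6} (for exactly the reason you identify, namely that $\exp(-\Theta(n\mathbb{E}[P_-^p]^2))$ is only $\Theta(1)$ when $P_-(0)=\omega(1/n)$ is small), the same geometric-sum control of the $\mathcal{O}(\log n)$ failure probabilities, and the same Markov-inequality finish. The only (immaterial) deviation is the handoff point: you stop the linear phase at $P_-\geq\epsilon'$ and square the gap $P_+$ down from $1-\epsilon'$ using $(1+\delta)(1-\epsilon')<1$, whereas the paper runs the linear phase all the way to $P_-\geq 1-\epsilon$ via $\mathbb{E}[P_-^p]=2p-p^2\geq(1+\epsilon)p$ on $[0,1-\epsilon]$ and squares from $P_+\leq\epsilon$; both variants are valid and yield the same bounds.
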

\begin{proof}
We prove by starting from negative seat probability $P_-(0)=\omega(1/n)$ the process reaches the negative seat probability $1-\epsilon$ for an arbitrarily small constant $\epsilon>0$ in $\mathcal{O}(\log n)$ rounds (phase 1), then it reaches the negative seat probability at least $1-\frac{\log n}{\sqrt{n}}$ in $\mathcal{O}(\log\log n)$ rounds (phase 2). Finally, the process reaches $\Sm$ in two more rounds (phase 3).

Instead of working with the exact formula of $\mathbb{E}[P_{-}^p]$, we set a lower bound $Q(p)$ of the form $Kp^{\ell}$ on $\mathbb{E}[P_{-}^p]$, which is easier to handle but only holds for $p\le 1-\epsilon$, where $\epsilon$ is an arbitrarily small constant. By applying Lemma~\ref{lemma 5} and Lemma~\ref{lemma 2}, we show that the negative seat probability increases to $1-\epsilon$ a.a.s. This argument fails for $p>1-\epsilon$. To continue, we find an upper bound $R(p)$ on $\mathbb{E}[P_{+}^p]$ and again apply Lemma~\ref{lemma 5} and Lemma~\ref{lemma 2} to prove that the positive seat probability decreases to $\frac{\log n}{\sqrt{n}}$, in phase 2. However, we cannot rely on this argument to show that the process reaches $\Sm$ a.a.s. since the error probability provided by Lemma~\ref{lemma 2} is a constant for small values of $p$. However, Markov's inequality~\cite{feller1968introduction}, interestingly, yields a stronger tail bound for small values of $p$. This allows us to prove that the process reaches $\Sm$ in at most two more rounds, which is discussed in phase 3. 

\paragraph{Phase 1.}
First, we show that by starting from $P_-(0)=\omega(1/n)$, the process reaches the negative seat probability at most $1-\epsilon$, for an arbitrarily small constant $\epsilon>0$, after $\mathcal{O}(\log n)$ rounds a.a.s.

We want to apply Lemma~\ref{lemma 5} (i). Thus, first we have to determine $Q(p)$ for which $\mathbb{E}[P_{-}^p]\ge Q(p)$. By Equation~(\ref{eq2}) for $a_2=1$, we have
\[
\mathbb{E}[P^p_-]=2p(1-p)+p^2=2p-p^2.
\] 
Thus, $\mathbb{E}[P_-^p]\geq (1+\epsilon)p$ for $p\in[0,1-\epsilon]$, where $\epsilon>0$ is an arbitrarily small constant. We set $\underline{b}=0$, $\overline{b}=1-\epsilon$, $b=P_-(0)$, and $Q(p)=Kp^{\ell}$ for $K=(1+\epsilon)$ and $\ell=1$.

Furthermore by Lemma~\ref{lemma 2}, $P_-^p\geq (1-\delta)\mathbb{E}[P_-^p]$ with probability at least $1-\exp(-\Theta(n\mathbb{E}[P_{-}^{p}]))$ for any constant $\delta>0$. Thus, we can set $I(\mathbb{E}[P_{+}^p])=\exp(-\Theta(n\mathbb{E}[P_+^p]))$. We set constant $\delta>0$ to be sufficiently small such that $(1-\delta)(1+\epsilon)=1+\epsilon'$ for some small constant $\epsilon'>0$.

Having Claim~\ref{claim 5} (whose proof is given below) in hand, we are now ready to apply Lemma~\ref{lemma 5}.

\begin{claim}
\label{claim 5}
There is some $m\le \log_{1+\epsilon'} n$ such that $Q^m(b)\ge 1-\epsilon$ and for all $t<m$ we have $(1+\epsilon')^tb\le Q^t(b)\le 1-\epsilon$.
\end{claim}
Therefore, given $P_-(0)=\omega(1/n)$, we have $P_{-}(m)\ge Q^m(b)\ge 1-\epsilon$ with probability at least 
\[
1-\sum_{t=1}^{m}\exp(-\Theta(nQ^t(b))).
\]
By Claim~\ref{claim 5}, this probability is larger than

\[
1-\sum_{t=1}^{m}\exp(-\Theta(n(1+\epsilon')^t\omega(1/n)))\ge 1-\sum_{t=1}^{\log_{1+\epsilon'}n}\exp(-\omega(1)(1+\epsilon')^t)=1-o(1).
\]
Note that $\sum_{t=1}^{\log_{1+\epsilon'} n}\exp(-\omega(1)(1+\epsilon')^t)$ is clearly smaller than the geometric series
\[
\frac{1}{\omega(1)}\sum_{t=1}^{\log_{1+\epsilon'} n}\frac{1}{(1+\epsilon')^t}=o(1).
\]
\paragraph{Proof of Claim~\ref{claim 5}.}
By part (iii) in Lemma~\ref{lemma 5},
\begin{equation*}
\label{eq 3}
Q^t(b)\ge(K(1-\delta))^{\sum_{j=0}^{t-1}\ell^j}b^{\ell^t}=((1+\epsilon)(1-\delta))^{\sum_{j=0}^{t-1}1}b^{1^t}=(1+\epsilon')^tb.
\end{equation*} 
Thus, for $t=\lceil \log_{1+\epsilon'}\frac{1-\epsilon}{b}\rceil\le \log_{1+\epsilon'}n$, we have $Q^t(b)\ge (1-\epsilon)$. Furthermore,
\[
Q^t(b)=(1-\delta)Q(Q^{t-1}(b))=(1-\delta)(1+\epsilon)Q^{t-1}(b)=(1+\epsilon')Q^{t-1}(b)\ge Q^{t-1}(b).
\] 
The two last statements together imply that there exists some $m\le\log_{1+\epsilon'}n$ such that $Q^{m}(b)\ge 1-\epsilon$ and for all $t<m$, $Q^{t}(b)\le 1-\epsilon$. This finishes the proof of Claim~\ref{claim 5}. 

\paragraph{Phase 2.}
So far we proved that given $P_-(0)=\omega(1/n)$, the process reaches the negative seat probability at least $1-\epsilon$ for an arbitrarily small constant $\epsilon>0$ a.a.s in logarithmically many rounds, say $t_0$. Now, we show that if $P_+(t_0)\le \epsilon$, the process reaches the positive seat probability at most $\frac{\log n}{\sqrt{n}}$ in $\mathcal{O}(\log\log n)$ rounds. 

Similar to phase 1, we want to apply Lemma~\ref{lemma 5}, but this time part (ii). Thus, we have to find an upper bound $R(p)$ on $\mathbb{E}[P_+^p]$. By Equation~(\ref{eq 1}), we know $\mathbb{E}[P_+^p]=p^2$. Furthermore based on Lemma \ref{lemma 2}, $P_+^p\leq (1+\delta')\mathbb{E}[P_+^p]$ with probability at least $1-\exp(-\Theta(n\mathbb{E}[P_+^p]))$ for $\delta'>0$. We apply Lemma~\ref{lemma 5} (ii) for $\underline{b}=\frac{\log n}{\sqrt{n}}$, $\overline{b}=b=\epsilon$, and $R(p)=Kp^{\ell}$ for $K=1$ and $\ell=2$. Let us select $\delta'$ sufficiently small so that $(1+\delta')\epsilon=\epsilon^{\prime\prime}$ for some small constant $0<\epsilon^{\prime\prime}<1$.

Finally, we need Claim~\ref{claim 6}, which is proven below, to apply Lemma~\ref{lemma 5}.

\begin{claim}
\label{claim 6}
There exits some integer $m\le 2\log\log n$ such that for all $t<m$ we have $R^t(\epsilon)\in [\frac{\log n}{\sqrt{n}},\epsilon]$ and $\frac{\log^2n}{n}\le R^m(\epsilon)\le\frac{\log n}{\sqrt{n}}$.
\end{claim}
Therefore, given $P_+(t_0)\le \epsilon$, $P_+(t_0+m)\le R^{m}(\epsilon)\le \frac{\log n}{\sqrt{n}}$ with probability at least
\[
1-\sum_{t=1}^{m}\exp(-\Theta(nR^{t}(\epsilon)))\ge 1-(m-1)\exp(-\Theta(n\frac{\log n}{\sqrt{n}}))-\exp(-\Theta(n\frac{\log^2 n}{n}))=1-o(1).
\]
Overall, given $P_-(0)=\omega(1/n)$, the process reaches the positive seat probability at most $\frac{\log n}{\sqrt{n}}$ in $\mathcal{O}(\log n)$ rounds.
\begin{remark}
Note that for this argument the choice of $\frac{\log n}{\sqrt{n}}$ is crucial. If we set $p=\mathcal{O}(1/\sqrt{n})$, then the error probability from above will be a constant.
\end{remark}
\paragraph{Proof of Claim~\ref{claim 6}.}
Firstly, we show by induction that $R^t(\epsilon)\le \epsilon$ for $t\ge 0$. For the base case, $R^0(\epsilon)=\epsilon$. As the induction hypothesis (I.H.) assume for some $t\ge 0$, $R^t(\epsilon)\le \epsilon$. By applying $R^{t+1}(\epsilon)=(1+\delta')R(R^t(\epsilon))$ and $R(p)=p^2$, we have
\[
R^{t+1}(\epsilon)=(1+\delta')R(R^t(\epsilon))\stackrel{\text{I.H.}}{\le} (1+\delta')\epsilon^2=\epsilon^{\prime\prime}\epsilon\le \epsilon.
\]

Furthermore, by part (iii) in Lemma~\ref{lemma 5}, we have 
\[
R^{t}(\epsilon)\le (1+\delta')^{\sum_{j=0}^{t-1}2^j}\epsilon^{2^{t}}\le ((1+\delta')\epsilon)^{2^t}=(\epsilon^{\prime\prime})^{2^t}
\]
which is smaller than $\frac{\log n}{\sqrt{n}}$ for $t=2\log\log n$. By combining this statement with $R^t(\epsilon)\le \epsilon$ for $t\ge 0$ from above, we conclude that there exists some $m\le 2\log\log n$ so that for all $t<m$ $R^{t}(\epsilon)\in [\frac{\log n}{\sqrt{n}},\epsilon]$ and $R^{m}(\epsilon)\le \frac{\log n}{\sqrt{n}}$. 

It is left to prove the lower bound on $R^m(\epsilon)$. By applying $R^{m-1}(\epsilon)\ge \frac{\log n}{\sqrt{n}}$, we get
\[
R^{m}(\epsilon)=(1+\delta')(R^{m-1}(\epsilon))^2\ge (1+\delta')\frac{\log^2n}{n}\ge \frac{\log^2 n}{n}
\]
which finishes the proof of Claim~\ref{claim 6}.

\paragraph{Phase 3.}
It remains to show that from positive seat probability at most $\frac{\log n}{\sqrt{n}}$, the process reaches $\Sm$ in two more rounds.

Recall that $P_{+}^p=n_{+}^p/n$ and by Markov's inequality for a non-negative variable $X$ and $a>0$, $\Pr[X\ge a\cdot \mathbb{E}[X]]\le 1/a$. Therefore, we have
\[
\Pr[P_+^p\geq \log n\cdot\mathbb{E}[P_+^p]]=\Pr[n_+^p\geq \log n\cdot\mathbb{E}[n_+^p]]\leq \frac{1}{\log n}.
\]
Furthermore, $\mathbb{E}[P_+^p]=p^2$. Thus with probability at least $1-1/\log n$, we have $P_+^p\le \log n\cdot p^2$. This implies that for $p\le \frac{\log n}{\sqrt{n}}$, a.a.s. $P_+^p\le \log n\cdot \frac{\log^2n}{n}=\frac{\log^3 n}{n}$. By applying the same argument one more time we get that if $p\le \frac{\log^3 n}{n}$, then a.a.s. $P_+^p\le \frac{\log^7}{n^2}$. Therefore, if $P_+(t_1)\le \frac{\log n}{\sqrt{n}}$ for some $t_1\ge 0$, then a.a.s. $P_+(t_1+2)\le \frac{\log^7 n}{n^2}$, i.e., $n_+(t_1+2)\le \frac{\log^7 n}{n}<1$. Hence, the process reaches $\Sm$ in two more rounds a.a.s. \qed
\end{proof} 

\paragraph{Tightness.} 
We claim that the bound $\mathcal{O}(\log n)$ on the consensus time of the process in Theorem~\ref{theorem 2} is asymptotically tight. We prove that for $a_2=1$, from $P_-(0)=\frac{1}{\sqrt{n}}$, after $\frac{\log n}{6}$ rounds a.a.s. the process still has not reached $\Sm$.

We know that $\mathbb{E}[P_-^p]=2p-p^2\leq 2p$ for any $p\in[0,1]$ and by Lemma~\ref{lemma 2}, $P_-^p\leq (1+\delta)\mathbb{E}[P_-^p]$ with probability at least $1-\exp(-\Theta(n\mathbb{E}[P_-^p]))$. Therefore, our set-up matches the conditions of Lemma~\ref{lemma 5} (ii) by considering $\underline{b}=0$, $\overline{b}=1$, $b=\frac{1}{\sqrt{n}}$, $\delta=1$ and $R(p)=Kp^{\ell}$ for $K=2$ and $\ell=1$. It is only left to set a suitable $m$, which we do in Claim~\ref{claim 7}.

\begin{claim}
\label{claim 7}
Let $m=\frac{\log_2 n}{6}$, then
\[
\frac{1}{\sqrt{n}}=R^{0}(b)\le R^{1}(b)\le \cdots \le R^{m-1}(b)\le R^{m}(b)\le\frac{1}{n^{1/6}}.
\]
\end{claim}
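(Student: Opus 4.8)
The plan is to obtain a closed-form expression for $R^t(b)$ and then read off all three assertions of the chain directly from it. Here the relevant parameters are $K=2$, $\ell=1$, $\delta=1$, and $b=\frac{1}{\sqrt{n}}$, so the one-round map is $R(p)=2p$ and the $t$-round map obeys the recursion $R^t(b)=(1+\delta)R(R^{t-1}(b))=2\cdot 2R^{t-1}(b)=4R^{t-1}(b)$. Because $R$ is linear (the exponent is $\ell=1$), this recursion is exact rather than merely an inequality, and part (iii) of Lemma~\ref{lemma 5} specialises to the clean identity $R^t(b)=(K(1+\delta))^{t}b=4^{t}b=4^{t}/\sqrt{n}$, using $\sum_{j=0}^{t-1}\ell^j=t$ and $b^{\ell^t}=b$ for $\ell=1$.

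With this formula in hand, the monotonicity part of the chain is immediate. Since $R^t(b)=4\,R^{t-1}(b)$ and the multiplicative factor $4$ exceeds $1$, the sequence $R^0(b),R^1(b),\ldots$ is strictly increasing, which yields $R^{0}(b)\le R^{1}(b)\le\cdots\le R^{m-1}(b)\le R^{m}(b)$ for every $m$.

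It then remains only to verify the two endpoints. The left endpoint is $R^0(b)=b=\frac{1}{\sqrt{n}}$ by the definition of $R^0$. For the right endpoint I would substitute $m=\frac{\log_2 n}{6}$ into $R^m(b)=4^m/\sqrt{n}$ and carry out the exponent bookkeeping: $4^{m}=2^{2m}=2^{(\log_2 n)/3}=n^{1/3}$, so that $R^m(b)=n^{1/3}/\sqrt{n}=n^{1/3-1/2}=n^{-1/6}=\frac{1}{n^{1/6}}$, giving the desired upper bound (in fact with equality).

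I do not anticipate any genuine obstacle here, since the argument is entirely a matter of unwinding the definition of $R^t$ in the special case $\ell=1$, where the $R$-recursion is exactly linear. The one place that warrants a moment's care is the exponent arithmetic in the last step, namely confirming that the choice $m=\frac{\log_2 n}{6}$ is precisely the value that turns $4^m/\sqrt{n}$ into $n^{-1/6}$. This is what pins down the $\frac{\log n}{6}$ bound appearing in the tightness statement and thereby certifies that $\Omega(\log n)$ rounds are genuinely required, matching the upper bound of Theorem~\ref{theorem 2}.
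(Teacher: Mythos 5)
Your proposal is correct and follows essentially the same route as the paper: both use the recursion $R^t(b)=(1+\delta)R(R^{t-1}(b))=4R^{t-1}(b)$ for the monotonicity and the identity $4^m=2^{2m}=n^{1/3}$ (via Lemma~\ref{lemma 5}(iii)) for the upper endpoint; your only refinement is observing that for $\ell=1$ the bound of Lemma~\ref{lemma 5}(iii) holds with equality, giving the closed form $R^t(b)=4^t/\sqrt{n}$, which the paper states merely as the inequality $R^m(b)\le 2^{2m}/\sqrt{n}$.
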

Therefore, given $P_-(0)=\frac{1}{\sqrt{n}}$, we have $P_-(m)\le R^m(b)\le \frac{1}{n^{1/6}}$ with probability at least
\[
1-\sum_{t=1}^{m}\exp(-\Theta(nR^m(b)))\ge 1-m\exp(-\Theta(n\frac{1}{\sqrt{n}}))=1-o(1).
\]
Note that $P_-(m)\le \frac{1}{n^{1/6}}$ is equivalent to $n_-(m)\le n^{5/6}$. Hence, after $m=\frac{\log_2 n}{6}$ rounds a constant fraction of seats are still positive a.a.s.

\paragraph{Proof of Claim~\ref{claim 7}.}
By definition, for any $t\ge 1$
\[
R^t(b)=(1+\delta)R(R^{t-1}(b))=2(2R^{t-1}(b))\ge R^{t-1}(b)
\]
Moreover,
\[
R^m(b)\le (K(1+\delta))^{\sum_{j=0}^{m-1}\ell^j}b^{\ell^m}\le 2^{2m}\frac{1}{\sqrt{n}}=\frac{n^{1/3}}{n^{1/2}}=\frac{1}{n^{1/6}}.
\]
Thus,
\begin{equation*}
\label{eq 4}
\frac{1}{\sqrt{n}}=R^{0}(b)\le R^{1}(b)\le \cdots \le R^{m-1}(b)\le R^{m}(b)\le\frac{1}{n^{1/6}}.
\end{equation*}

\subsection{Threshold Value and Room Sizes}
\label{threshold-room}
The distribution of the room sizes is the main determining parameter of the Galam model. To have a better insight into the behavior of the model, it is essential to understand how alternating the room sizes affects the threshold behavior of the process. In the rest of this section, we aim to address this question. 

Recall from Theorem~\ref{theorem 5} that if $a_i=1$ (i.e., all rooms are of the same size $i$) for $i\ge 3$, then the process exhibits a threshold behavior at $\alpha_i$, the unique fixed point of $f_i(p)$ in $(0,1)$. If $i$ is odd, there is no tie-breaking and thus the threshold value is equal to $1/2$. In Theorem~\ref{theorem 9}, we prove that for even $i$ $\alpha_{i+2}<\alpha_i$, that is the threshold for the positive opinion to win the debate goes down if rooms get larger. This is plausible, since ties (that benefit the negative opinion) become less likely.

Let us first discuss the following simple lemma, which we apply to prove Theorem~\ref{theorem 9}. Assume that we flip a biased coin, which is in favor of head, $i$ times independently and we will win if more than half of the flips are head. Lemma \ref{lemma 1} demonstrates that the probability that we win when we flip the coin $i+2$ times is higher than if we do it $i$ times.

\begin{lemma}
\label{lemma 1}
For $f_i(p)=\sum_{j=\lfloor\frac{i}{2}\rfloor+1}^{i}{i \choose j}p^{j}(1-p)^{i-j}$ and $i\in \mathbb{N}$

(i) $f_i(p)<f_{i+2}(p)$ if $1/2<p<1$.

(ii) $f_i(p)>f_{i+2}(p)$ if $0<p<1/2$ and $i$ odd.
\end{lemma}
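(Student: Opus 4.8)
The plan is to interpret $f_i(p)$ probabilistically and then compare $f_{i+2}$ with $f_i$ by conditioning on two extra coin flips. Writing $X\sim\mathrm{Bin}(i,p)$ for the number of heads in $i$ tosses of a $p$-biased coin, the definition of $f_i$ gives $f_i(p)=\Pr[X\ge k]$ with $k:=\lfloor i/2\rfloor+1$. Since $\lfloor (i+2)/2\rfloor+1=k+1$, I would realize the $(i+2)$-flip experiment as $X+Z$, where $Z=Y_1+Y_2$ counts the heads among two further independent flips, independent of $X$. Conditioning on $Z\in\{0,1,2\}$ yields
\[
f_{i+2}(p)=(1-p)^2\Pr[X\ge k+1]+2p(1-p)\Pr[X\ge k]+p^2\Pr[X\ge k-1].
\]

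The key step is then a short telescoping computation. Subtracting $f_i(p)=\big((1-p)^2+2p(1-p)+p^2\big)\Pr[X\ge k]$ and using $\Pr[X\ge k+1]-\Pr[X\ge k]=-\Pr[X=k]$ together with $\Pr[X\ge k-1]-\Pr[X\ge k]=\Pr[X=k-1]$, I obtain the clean identity
\[
f_{i+2}(p)-f_i(p)=p^2\Pr[X=k-1]-(1-p)^2\Pr[X=k].
\]
Expanding the two binomial point probabilities and factoring out the common term $p^k(1-p)^{i-k+1}$, which is strictly positive for $p\in(0,1)$, reduces everything to the sign of
\[
\binom{i}{k-1}p-\binom{i}{k}(1-p).
\]

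Finally I would split on the parity of $i$ using elementary binomial identities. For odd $i=2m+1$ one has $k=m+1$ and $\binom{i}{k-1}=\binom{2m+1}{m}=\binom{2m+1}{m+1}=\binom{i}{k}$ by the symmetry of Pascal's triangle, so the bracket equals $\binom{2m+1}{m}(2p-1)$; this is positive exactly when $p>1/2$ and negative exactly when $p<1/2$, giving both (i) and (ii) in the odd case. For even $i=2m$ one has $\binom{i}{k}=\binom{2m}{m+1}=\tfrac{m}{m+1}\binom{2m}{m}$, so the bracket equals $\binom{2m}{m}\big(p-\tfrac{m}{m+1}(1-p)\big)$, which is positive whenever $p>1/2$, since then $\tfrac{m}{m+1}(1-p)<1-p<p$, completing (i). I expect no real obstacle beyond the bookkeeping; the only conceptual subtlety is seeing why (ii) must exclude even $i$: there the bracket vanishes at $p=\tfrac{m}{2m+1}<1/2$ and changes sign inside $(0,1/2)$, so the inequality is simply not monotone in that regime. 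The substantive work is thus deriving the difference identity, after which the claim falls out of one binomial-coefficient comparison per parity.
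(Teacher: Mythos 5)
Your proof is correct and takes essentially the same approach as the paper's: both condition on the two extra coin flips and reduce the claim to the sign of $p^2\Pr[X=k-1]-(1-p)^2\Pr[X=k]$, i.e., to a single comparison of adjacent binomial point probabilities. The only difference is presentational --- you treat both parities explicitly and pinpoint why (ii) fails for even $i$ (sign change at $p=\tfrac{m}{2m+1}$), whereas the paper writes out only the odd case with $p>1/2$, calls the rest ``analogous,'' and notes the even-$i$ failure in a separate remark.
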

\begin{proof} 
We discuss the case that $i$ is odd and $1/2<p<1$; the case that $i$ is even and part (ii) can be proven analogously. $f_i(p)$ is equal to the probability that we flip a biased coin (with head probability $p$ and tail probability $1-p$) $i$ times independently and more than half of the trials are head. Assume by the probability of win with $i$ trials, we mean the probability that more than half of $i$ independent trials are head. Then, it is sufficient to show that the probability of win with $i+2$ trials is more than the probability of win with $i$ trials.

Assume random variables $H$ and $T$ denote the number of heads and tails, respectively, after $i$ trials. Now, assume $q_1:=\Pr[H\geq T+3]$, $q_2:=\Pr[H=T+1]$, and $q_3:=\Pr[H=T-1]$. By considering that $i$ is odd, the probability of win with $i$ trials is $q_1+q_2$. In other words, $f_i(p)=q_1+q_2$. Furthermore, the probability of win with $i+2$ trials is equal to $q_1+q_2(1-(1-p)^2)+q_3p^2$ which implies $f_{i+2}(p)=q_1+q_2(1-(1-p)^2)+q_3p^2$. Thus to prove $f_i(p)<f_{i+2}(p)$, it suffices to show that $q_2(1-p)^2<q_3p^2$ which is equivalent to
\[
{i \choose \lfloor \frac{i}{2}\rfloor} p^{\lfloor \frac{i}{2}\rfloor+1} (1-p)^{\lfloor \frac{i}{2}\rfloor} (1-p)^2<{i \choose \lfloor \frac{i}{2}\rfloor} p^{\lfloor \frac{i}{2}\rfloor} (1-p)^{\lfloor \frac{i}{2}\rfloor+1} p^2 \Leftrightarrow 1-p < p
\]
and it is true by $p>1/2$. \qed
\end{proof}
\begin{remark}
Note that part (ii) of Lemma~\ref{lemma 1} does not hold for even $i$. Assume that $p$ is almost $1/2$, then for $i=2$, $f_i(p)\simeq 4/16$ and for $i=4$, $f_i(p)\simeq 5/16$.
\end{remark}
\begin{theorem}
\label{theorem 9}
Let $\alpha_i$ and $\alpha_{i+2}$ be the threshold values respectively for $a_i=1$ and $a_{i+2}=1$ for some even $i\ge 3$, then $\alpha_i>\alpha_{i+2}$.
\end{theorem}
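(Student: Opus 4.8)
The plan is to compare the two fixed points indirectly, by evaluating the drift function $h_{i+2}(p)=f_{i+2}(p)-p$ at the single point $p=\alpha_i$ and then invoking the sign pattern of $h_{i+2}$ established in Lemma~\ref{lemma 3}. That lemma tells us $h_{i+2}$ has a unique root $\alpha_{i+2}$ in $(0,1)$, is negative on $(0,\alpha_{i+2})$, and is positive on $(\alpha_{i+2},1)$. Consequently, to prove $\alpha_{i+2}<\alpha_i$ it suffices to show that $h_{i+2}(\alpha_i)>0$: this single strict positivity forces $\alpha_i$ into the region $(\alpha_{i+2},1)$ where $h_{i+2}>0$, which is precisely the claim. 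So the whole argument reduces to a one-point comparison, and the only ingredient beyond Lemma~\ref{lemma 3} is a comparison between $f_i$ and $f_{i+2}$ at $\alpha_i$, which Lemma~\ref{lemma 1}(i) supplies on the interval $(1/2,1)$.

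First I would establish the auxiliary bound (this is the ``part (i)'' invoked in the proof of Claim~\ref{claim 2}): for every $i\ge 3$ one has $\alpha_i\ge 1/2$, with strict inequality $\alpha_i>1/2$ exactly when $i$ is even. I would do this by evaluating $f_i$ at $p=1/2$ and using binomial symmetry. For odd $i$ the upper tail $\sum_{j=\lfloor i/2\rfloor+1}^{i}\binom{i}{j}$ equals exactly $2^{i-1}$, so $f_i(1/2)=1/2$ and hence $\alpha_i=1/2$. For even $i$ we have $\lfloor i/2\rfloor=i/2$, and
\[
f_i(1/2)=\Big(\tfrac12\Big)^i\sum_{j=i/2+1}^{i}\binom{i}{j}=\frac{2^i-\binom{i}{i/2}}{2^{i+1}}=\frac12-\frac{\binom{i}{i/2}}{2^{i+1}}<\frac12,
\]
so $h_i(1/2)<0$; since $h_i$ is negative precisely on $(0,\alpha_i)$ by Lemma~\ref{lemma 3}, the point $1/2$ lies strictly below the root, giving $\alpha_i>1/2$. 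This is the sole place where the evenness hypothesis enters: it guarantees the strict inequality $\alpha_i>1/2$, putting $\alpha_i$ into the open interval $(1/2,1)$ on which Lemma~\ref{lemma 1}(i) is valid.

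With $1/2<\alpha_i<1$ in hand the comparison is immediate. By Lemma~\ref{lemma 1}(i), $f_i(p)<f_{i+2}(p)$ for all $p\in(1/2,1)$, and in particular at $p=\alpha_i$. Using that $\alpha_i$ is the fixed point of $f_i$, this reads
\[
h_{i+2}(\alpha_i)=f_{i+2}(\alpha_i)-\alpha_i>f_i(\alpha_i)-\alpha_i=0,
\]
and, as explained in the first paragraph, $h_{i+2}(\alpha_i)>0$ together with the single-crossing sign structure of $h_{i+2}$ from Lemma~\ref{lemma 3} yields $\alpha_i>\alpha_{i+2}$. I do not anticipate a genuine obstacle here: the result is a short consequence of the pointwise domination $f_i<f_{i+2}$ on $(1/2,1)$ and the single-crossing property of the drift functions. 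The only subtlety worth flagging is the necessity of the evenness assumption, since for odd $i$ one has $\alpha_i=1/2$ exactly and Lemma~\ref{lemma 1}(i) no longer supplies a strict inequality at the fixed point, so the argument would not separate $\alpha_i$ from $\alpha_{i+2}$ in that regime.
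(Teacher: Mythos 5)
Your proposal is correct and follows essentially the same route as the paper's proof: you establish $\alpha_i>1/2$ via the computation $h_i(1/2)=-\binom{i}{i/2}/2^{i+1}<0$ together with the single-crossing structure from Lemma~\ref{lemma 3}, and then apply Lemma~\ref{lemma 1}(i) on $(1/2,1)$ to get $h_{i+2}(\alpha_i)>0$, forcing $\alpha_i>\alpha_{i+2}$. The only (harmless) streamlining is that you observe only $\alpha_i>1/2$ is actually needed, whereas the paper also verifies $h_{i+2}(1/2)<0$.
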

\begin{proof} 
Based on Theorem \ref{theorem 5} the threshold values $\alpha_i$ and $\alpha_{i+2}$ are respectively the unique root of $h_i(p)$ and $h_{i+2}(p)$ in the interval of $(0,1)$. Furthermore, $h_i(p)$ ($h_{i+2}(p)$, respectively) is negative from $0$ to $\alpha_i$ ($\alpha_{i+2}$, respectively) and positive from $\alpha_i$ ($\alpha_{i+2}$, respectively) to $1$. Therefore, to prove $\alpha_i>\alpha_{i+2}$, it is sufficient to prove the two following statements.

\textit{(i)} $h_i(p)<0$ and $h_{i+2}(p)<0$ for $0<p\leq 1/2$. (This implies that $\alpha_i, \alpha_{i+2} >1/2$.) 

\textit{(ii)} $h_{i+2}(p)>h_{i}(p)$ for $1/2<p<1$. (This is deduced directly from Lemma \ref{lemma 1}.)

To prove (i), it suffices to show that $h_i(1/2)<0$ and $h_{i+2}(1/2)<0$. We show $h_i(1/2)<0$; the case of $h_{i+2}(1/2)<0$ is proven similarly.
\[
h_i(\frac{1}{2})=\frac{1}{2^i}\sum_{j=i/2+1}^{i} {i \choose j}-\frac{1}{2}=\frac{1}{2^i}(2^{i-1}-\frac{{i \choose \frac{i}{2}}}{2})-\frac{1}{2}=-\frac{{i \choose \frac{i}{2}}}{2^{i+1}}<0. 
\]\qed

\end{proof}

\section{Future Research Directions}
\label{conclusion}
As we mentioned in Section \ref{Introduction}, after the introduction of the Galam model \cite{galam2002minority}, a series of extensions (see e.g.~\cite{galam2004contrarian,galam2005heterogeneous,gekle2005opinion,bischi2010binary,qian2015activeness}) have been suggested to make the Galam model more realistic to describe social situations. We shortly discuss some prior experimental and analytical results concerning these extensions. Naturally, a potential direction for future research is to approach these variants from a more theoretical angle. 

In \cite{galam2005heterogeneous}, a generalized version of the model was presented, where in case of a tie each individual becomes positive with fixed probability $k$ and negative with probability $1-k$ independently. One can see that in this model, we have
\[
\mathbb{E}[P_{+}^p]=\sum_{i=1}^{\S}a_i \sum_{j=\lfloor\frac{i}{2}\rfloor+1}^{i} {i \choose j} p^{j}(1-p)^{i-j}+k\sum_{i=1}^{\lfloor \S/2\rfloor}a_{2i}{2i \choose i}p^i (1-p)^i.
\]
The Galam model is a special case of the aforementioned model for $k=0$. It is conjectured~\cite{galam2005heterogeneous} that this generalized variant also exhibits a threshold behavior. We claim our concentration result from Corollary \ref{lemma 6} can be simply extended to this setting; however, arguing the uniqueness of the fixed point remains as an open problem for future research.

Gekle et al. \cite{gekle2005opinion} investigated the case of three opinions A, B, and C instead of two but in a very specific setting of all rooms of size 3. This leads to the possibility of ties, when each member of the group has a different opinion. They considered the probabilities $P_A$, $P_B$, and $1-P_A-P_B$ of resolving the tie in favor of A, B, and C respectively and tried to analyze the behavior of the model in this setting by computer simulations. 

Furthermore, Galam~\cite{galam2004contrarian} analytically studied the effect of adding the contrarian individuals into the model. A contrarian is an individual who adopts the choice opposite to the prevailing choice of others whatever this choice is.

According to the room-wise majority model, which is the base of the Galam model, each individual participates at every update of an opinion interaction. While the scheme gives everyone the same chance to influence others, in reality, social activity and influence vary considerably from one individual to another. To account for such a feature, Qian et al. \cite{qian2015activeness} introduced a new individual attribute of \textit{activeness} which makes some individuals more inclined than others at engaging in local discussions.

Galam and Jacobs~\cite{galam2007role} introduced another variant of the model, which simulates a society that includes inflexible members. At contrast to \textit{floaters}, the individuals who do flip their opinion to follow the local majority, \textit{inflexibles} keep their opinion always unchanged. The authors analytically investigate the inflexible effect in the specific setting of all rooms of size three. Let us define their model more formally. Similar to the original Galam model, assume there are $r_i$ rooms with $i$ seats for $2\le i \leq \S$ and assume $\sum_{i=1}^{\S}a_i=1$, where $a_i:=\frac{ir_i}{n}$, which implies there are exactly $n$ seats. Assume $P_{\pm}(t)$ for $t \geq 0$ denotes the positive/negative seat probability in round $t$. The process starts from positive seat probability $P_{+}(0)$, negative seat probability $P_{-}(0)$, positive inflexible seat probability $a$, and negative inflexible seat probability $b$ with $P_{+}(0)+P_{-}(0)=1$, $P_+(0)\geq a$, and $P_-(0)\geq b$. In round $t\geq 1$, each seat will be independently positive inflexible, positive floater, negative inflexible, and negative floater respectively with probability
$a$, $P_{+}(t-1)-a$, $b$, and $P_{-}(t-1)-b$.

Once each seat has been assigned an opinion, we proceed as in the room-wise majority model to determine the number of positive seats, with the exception that inflexibles keep their opinion unchanged. 

In this process eventually the positive seat probability decreases to $a$ (we say the victory of negatives) or negative seat probability decreases to $b$ (victory of positives). Galam and Jacobs~\cite{galam2007role} conjectured that this process shows a threshold behavior; i.e., for a threshold value $\alpha$ in $(a,1-b)$, $P_{+}(0)<\alpha$ results in the victory of negatives and $P_{+}(0)>\alpha$ outputs the victory of positives.

Let us define random variable $P_{\pm}^p$ similar to the Galam model, which tells us how the positive/negative seat probability evolves after one round of the process. We claim that 
\[
\mathbb{E}[P_+^p]=\sum_{i=1}^{\S}a_i\Big(\sum_{j=\lfloor\frac{i}{2}\rfloor+1}^{i}{i-1 \choose j-1}p^j(1-p)^{i-j}+
\]
\[
\sum_{j=\lfloor\frac{i}{2}\rfloor+1}^{i}{i-1 \choose j}(1-p-b)p^j(1-p)^{i-j-1}+\sum_{j=1}^{j=\lfloor\frac{i}{2}\rfloor}{i-1 \choose j-1}ap^{j-1}(1-p)^{i-j}\Big).
\]
For a fixed seat in a room of size $i$, the first term in the aforementioned sum stands for the probability that the seat is positive, floater or inflexible, and the dominant opinion in the room is positive. The second term is the probability that the seat is negative floater, however the dominant opinion is positive, and the third term corresponds to the state that the seat is positive inflexible but negatives are not less than positives in the room. Clearly, these are the only possible scenarios in which the output of a seat is positive.
 
Our result regarding the concentration of $P_{\pm}^p$ in Corollary \ref{lemma 6} can be simply extended to this setting. Therefore, the next step might be to prove that the function corresponding to $\mathbb{E}[P_+^p]$ has a unique fixed point in the interval of $(a,1-b)$.

Finally, we shortly discuss the relation between the room sizes distribution and the consensus time of the process in the Galam model. Roughly speaking, rooms of larger size accelerate the process because the opinion of a seat is decided based on the dominant opinion in a larger sample of the community. For instance, in the extreme case of just one room of size $n$, the process reaches $\Sp$ or $\Sm$ in one round, but for all rooms of size two, the process might need logarithmically many rounds in expectation to reach $\Sp$ or $\Sm$ (see Section~\ref{room-size-2}). To address this observation, let us compare two special cases of $a_i=1$ and $a_{i+2}=1$ for some odd $i$. The comparison of the consensus times in this setting is sensible since there is no tie-breaking and thus the threshold value in both cases is $1/2$.
Suppose random variables $P_{+,1}(t)$ and $P_{+,2}(t)$ denote the positive seat probability after $t$ rounds respectively for $a_i=1$ and $a_{i+2}=1$. Define $P_{+,1}^p$ and $P_{+,2}^p$ similar to $P_{+}^p$.
Based on Lemma \ref{lemma 1}, $\mathbb{E}[P_{+,1}^p]<\mathbb{E}[P_{+,2}^p]$ for $1/2<p<1$ and $\mathbb{E}[P_{+,2}^p]<\mathbb{E}[P_{+,1}^p]$ for $0<p<1/2$. Therefore, by starting from the same initial positive seat probability, the process converges to $\Sm$ or $\Sp$ in the case of $a_{i+2}=1$ faster than $a_i=1$, in expectation. 

It would be interesting to analyze the relation between the room sizes distribution and the consensus time more formally and in a more general framework in future work.

A concrete open problem arising from this work is to remove the limitation of $\S\leq 16$ on the room size in Theorem~\ref{theorem 7}. For this, it would be sufficient to prove that the polynomial 
\[
\mathbb{E}[P_+^p]= \sum_{i=1}^{\S}a_i \sum_{j=\lfloor\frac{i}{2}\rfloor+1}^{i} {i \choose j} p^{j}(1-p)^{i-j}
\]
from Equation~(\ref{eq 1}) has a unique fixed point whenever $\sum_{i=1}^{\S} a_i=1$ and
$a_i\geq 0$ for $1\leq i\leq L$. We have proved this if $L\leq 16$, or if $a_i=1$ for some $i\geq 3$.

\paragraph{Acknowledgment.}We wish to thank Serge Galam for referring to some prior works.
\bibliographystyle{acm} %
\bibliography{ref}
\end{document}